\DeclareMathOperator{\real}{Re}
\def\Rset{{\mathbb{R}}}
\def\Cset{{\mathbb{C}}}
\def\Nset{{\mathbb{N}}}
\def\eu{\ensuremath{\mathrm{e}}}
\def\iu{\ensuremath{\mathrm{i}}}
\def\du{\ensuremath{\mathrm{d}}}
\newtheorem{theorem}{Theorem}
\newtheorem{remark}[theorem]{Remark}
\newtheorem{proposition}[theorem]{Proposition}
\newtheorem{corollary}[theorem]{Corollary}
\theoremstyle{definition}
\newenvironment{pf}{\medskip\noindent{\bf Proof.}\enspace}   
{\hfill\newline\smallskip}
\begin{document}

%\begin{frontmatter}

\title{The Prabhakar or three parameter Mittag--Leffler function: theory and application \footnote{This is an updated preprint of the paper published in \emph{Communications in Nonlinear Science and Numerical Simulation}, \url{http://dx.doi.org/10.1016/j.cnsns.2017.08.018} Volume 56, 2018, Pages 314-329, doi: 10.1016/j.cnsns.2017.08.018 - Some misprints in equations (\ref{HE1a}) and (\ref{HE1}) have been corrected.}}

\author{Roberto Garra \\
\small Universit\`a La Sapienza, Dipartimento di Scienze Statistiche, Roma, Italy \\
\small \texttt{roberto.garra@sbai.uniroma1.it}
\and
Roberto Garrappa \\
\small Universit\`a degli Studi di Bari, Dipartimento di Matematica, Bari, Italy\\ 
\small \texttt{roberto.garrappa@uniba.it}
}

\date{May 12$^{\text{th}}$, 2017} 

\maketitle

\begin{abstract}
The Prabhakar function (namely, a three parameter Mittag-Leffler function) is investigated. This function plays a fundamental role in the description of the anomalous dielectric properties in disordered materials and heterogeneous systems manifesting simultaneous nonlocality and nonlinearity and, more generally, in models of Havriliak-Negami type. After reviewing some of the main properties of the function, the asymptotic expansion for large arguments is investigated in the whole complex plane and, with major emphasis, along the negative semi-axis. Fractional integral and derivative operators of Prabhakar type are hence considered and some nonlinear heat conduction equations with memory involving Prabhakar  derivatives are studied.

{\bf Keywords:} Prabhakar function, Mittag-Leffler function, asymptotic expansion, fractional calculus, Prabhakar derivative, Havriliak-Negami model, nonlinear heat equation.
\end{abstract}

%% main text
\section{Introduction}\label{S:Introduction}

%\begin{itemize}

%	\item in \cite{Pandey2017} sono studiati i modelli di rilassamento  dieltterico con la Prabhakar (vicino ad HN)
%	\item \cite{DerakhshanAhmadiMohammadrezaAnsariKhoshsiar2016} si studia la stabilita
%	\item in \cite{LiemertSandevKantz2017} it is shown that that the  Prabhakar integral operator appears in the analysis of correlation functions of  a Generalized Langevin equation with tempered memory kernel 
%  \item in \cite{EshaghiAnsari2016}  we consider a fractional boundary value problem including the Prabhakar fractional derivative. We obtain associated Green function for this fractional boundary value problem and get a Lyapunov-type inequality for it.
%  \item in \cite{AskariAnsari2016} we introduce a generalization of the Hilfer-Prabhakar derivative and obtain the Euler-Lagrange equations and  Hamiltonian formulation with respect to this fractional derivative in the theory of fractional calculus of variations. Also, we get a sufficient condition for optimality.
%\item In \cite{PanevaKonovska2013,PanevaKonovska2014} si studia la convergenza ... we study the convergence in series of such function
% \item in \cite{FigueiredoCamargoCapelasOliveiraVaz2012} is used to find a solution of a Cauchy problem for the fractional telegraph equation
%\end{itemize}

The Mittag--Leffer (ML) function was introduced, at the beginning of the 19-th century, by the Swedish mathematician Magnus Gustaf Mittag--Leffler in connection with methods for summation of divergent series \cite{Mittag-Leffler1902,MittagLeffler1904}. Successively, several other functions of this type were proposed and investigated (see, for instance, \cite{AlBassamLuchko1995,GarraPolito2013,Gerhold2012,GorenfloKilbasRogosin1998,KilbasSaigo1995,KilbasKorolevaRogosin2013,Kiryakova2010} and the recent monograph \cite{GorenfloKilbasMainardiRogosin2014} completely devoted to ML functions). 

%Because of the possibility of defining functions with a different number of parameters and due to the dependence of these functions on such parameters, it is often referred to ML functions as a family of functions. 

The main interest in studying ML functions is related to their importance in fractional calculus where they play the same fundamental role of the exponential function in integer order calculus.

In recent years a three parameter ML function, also known as the Prabhakar\footnotemark\, function \cite{Prabhakar1971}, is attracting a remarkable attention. For any $z \in \Cset$, this function is defined as
\begin{equation}\label{eq:Prabhakar}
	E_{\alpha,\beta}^{\gamma}(z) = \frac{1}{\Gamma(\gamma)}\sum_{k=0}^{\infty} \frac{ \Gamma(\gamma+k) z^{k}}{k! \Gamma(\alpha k + \beta)}
	, \quad 
	\alpha, \beta, \gamma \in \Cset
	, \quad \real(\alpha) > 0 ,
\end{equation}
where, as usual, $\Gamma(\cdot)$ denotes the Euler's gamma function. $E_{\alpha,\beta}^{\gamma}(z)$ is an entire function of order $\rho = 1/\real(\alpha)$ and type $\sigma=1$ \cite{HauboldMathaiSaxena2011,Lavault2017}. 

\footnotetext{The Indian mathematician Tilak Raj Pabhakar, who this function is named after, completed the M.A. and B.A. at the Benaras Hindu University (India) and obtained his doctorate form the University of Delhi (India) in 1970 with a thesis on ``Integral Equations and Special Functions''. He taught in India at the S.D. College in Muzzaffarnagar, at the Multani Mal Modi College in Modinagar (where he became Head of the Department of Mathematics) and at the Ramjas College of the University of Delhi. He was professor of Mathematics at the National Institute of Electricity and Electronics in Boomerdes (Algeria) when he died in 1982 \cite{Ganita1982}.}

Some efforts have been recently undertaken to study the main properties of the Prabhakar function (e.g., see \cite{KilbasSaigoSaxena2004,NigmatullinKhamzinBaleanu2016,SaxenaSaigo2005}) with major emphasis on the  complete monotonicity \cite{CapelasMainardiVaz2011,TomovskiPoganySrivastava2014,MainardiGarrappa2015_JCP}.

However, as pointed out in \cite[Page 101]{GorenfloKilbasMainardiRogosin2014}, the asymptotic behavior of the Prabhakar function ``has not yet been described in an explicit form'' in the whole complex plane. The reason for this lack is mainly due to the dependence on the 3 parameters $\alpha$, $\beta$ and $\gamma$ which makes quite difficult the explicit derivation of the coefficients in the asymptotic expansion. 

Formula (\ref{eq:Prabhakar}) provides a representation of the Prabhakar function in terms of a Taylor series centered at the origin and it is therefore especially useful to describe its behavior as $|z|\to 0$. 

The derivation of a general expansion for large arguments, i.e. as $|z|\to \infty$, still remains an under-explored subject. In some limited cases the problem can be faced by means of the Laplace transform thanks to which it is possible to show (e.g., see \cite{MainardiGarrappa2015_JCP}) that for $t$ real and positive it is 
\[
	E_{\alpha,\beta}^{\gamma}(-t^{\alpha}) \sim  \sum_{k=0}^{\infty}\binom{-\gamma}{k} \frac{t^{-\alpha (\gamma+k)}}{\Gamma\bigl(\beta-\alpha(\gamma+k)\bigr)} , \quad t \to \infty
\]
thus providing an asymptotic expansion on the negative real axis (note that the first term in the above summation cancels out when $\beta=\alpha\gamma$).

Anyway, as also noticed in \cite{GorenfloKilbasMainardiRogosin2014}, asymptotic expansions of the Prabhakar function can be found from its representation as a generalized Fox-Wright function \cite{Fox1928,Wright1935}. Several authors have indeed studied the asymptotic behavior of the Fox-Wright functions and these results are surely useful for the Prabhakar function too.

One of the main aims of this paper is to apply to the Prabhakar function some of the results on the asymptotic expansion of Fox-Wright functions (in particular those published in 2010 by Paris \cite{Paris2010}) and explicitly represent, in the whole complex plane, the asymptotic expansions for large arguments; at the same time we show an algorithmic procedure for the computation of the coefficients in the expansion and we perform a deeper examination of the behavior of the function on the negative semi axis (a subject which is of prominent importance for applications).

Most of the interest in the Prabhakar function is related to the description of relaxation and response in anomalous dielectrics of Havriliak--Negami type (e.g., see \cite{GarrappaMainardiMaione2016,StanislavskyWeron2016_FCAA,Pandey2017}), a model of complex susceptibility introduced to keep into account the simultaneous nonlocality and nonlinearity observed in the response of disordered materials and heterogeneous systems \cite{Miskinis2009}. Further applications of the Prabhakar function are however encountered in probability theory \cite{GorskaHorzelaBaratekPensonDattoli2016,James2010,PoganyTomovski2016}, in the study of stochastic processes \cite{DOvidioPolito2013,PolitoScalas2016} and of systems with strong anisotropy \cite{ChamatiTonchev2006,Tonchev2007}, in fractional viscoelasticity \cite{GiustiColombaro2017}, in the solution of some fractional boundary-value problems \cite{BazhlekovaDimovski2013,BazhlekovaDimovski2014,EshaghiAnsari2016,FigueiredoCamargoCapelasOliveiraVaz2012,LuchkoSrivastava1995}, in the description of dynamical models of spherical stellar systems \cite{AnVanHeseBaes2012} and in connection with other fractional or integral differential equations \cite{AskariAnsari2016,LiemertSandevKantz2017,KilbasSaigoSaxena2002}. 

The time-evolution of polarization processes in Havriliak-Negami models and of the other above listed phenomena can be suitably described by integral and differential operators based on the Prabhakar function \cite{ChaurasiaPandey2010,DOvidioPolito2013,GarraGorenfloPolitoTomovski2014,Garrappa2016_CNSNS,PolitoTomovski2016,GuptaShaktawatKumar2016} (see also \cite{DerakhshanAhmadiMohammadrezaAnsariKhoshsiar2016} for an analysis of stability properties). In this paper, after providing an outline of Prabhakar operators (whose introduction is very recent in literature), we study some nonlinear heat conduction equations with memory involving Prabhakar derivatives; we find the exact solution and show the asymptotic behavior as $t \to \infty$.

This paper is organized as follows. We first review, in Section \ref{S:Property}, the main properties of the Prabhakar function and we present integral and derivative operators of Prabhakar type. In Section \ref{S:AsymptoticExpansion} we illustrate the results on the asymptotic expansions of the function for large arguments and we discuss the behavior on the real negative semi-axis (the most technical details concerning the algorithm for the derivation of the coefficients are collected in the Appendix at the end of paper). Section \ref{S:Heat} is devoted to the study of two nonlinear heat conduction equations with derivative of Prabhakar type and, finally, some concluding remarks are presented in Section \ref{S:Concluding}.

\section{Properties of the Prabhakar function}\label{S:Property}

For completeness of exposition, we summarize some of the most remarkable properties of the Prabhakar function. For a more in-depth discussion we refer, for instance, to \cite{Prabhakar1971,HauboldMathaiSaxena2011,KilbasSaigoSaxena2004,KurulayBayram2012,MathaiSaxenaHaubold2010,PanevaKonovska2013,PanevaKonovska2014}.

\subsubsection*{Integer values of $\gamma$}

Since for $k\ge 1$ it is $\Gamma(\gamma+k)/\Gamma(\gamma) = \gamma(\gamma+1)\cdots(\gamma+k-1)$, when $\gamma=0$ we clearly have $E_{\alpha,\beta}^{0}(z) = 1/\Gamma(\beta)$. When $\gamma=1$, $E_{\alpha,\beta}^{1}(z)=E_{\alpha,\beta}(z)$ is instead the standard ML function with two parameters. Thus, thanks to a formula for the reduction in the third parameter \cite{Prabhakar1971} 
\begin{equation}\label{eq:ReductionGamma}
	E_{\alpha,\beta}^{\gamma+1}(z) = 
	\frac{E_{\alpha,\beta-1}^{\gamma}(z) + (1-\beta+\alpha\gamma) E_{\alpha,\beta}^{\gamma}(z)}{\alpha \gamma} ,
\end{equation}
it is possible to write the Prabhakar function, when $\gamma =k \in \Nset$, in terms of $k$ values of the standard ML function; for instance, we can see that
\[
	\begin{aligned}
		E_{\alpha,\beta}^{2}(z) &= \frac{1}{\alpha} \left[ E_{\alpha,\beta-1}(z) + (1-\beta+\alpha) E_{\alpha,\beta}(z) \right] \\
		E_{\alpha,\beta}^{3}(z) &= \frac{1}{2\alpha^2} \left[ E_{\alpha,\beta-2}(z) + (1-\alpha) E_{\alpha,\beta-1}(z) + (1-\beta+2\alpha)(1-\beta+\alpha) E_{\alpha,\beta}(z) \right] \\
	\end{aligned}
\]

Moreover, whenever $\gamma=-j$, with $j \in \Nset$, a simple computation allows to verify that the Prabhakar function is the $j$-degree polynomial
\[
	E_{\alpha,\beta}^{-j}(z) = \sum_{k=0}^{j} (-1)^k\binom{j}{k} \frac{z^{k}}{\Gamma(\alpha k + \beta)} .
\]

\subsubsection*{The Laplace transform}

As in the case of the standard Mittag-Leffler function, a direct Laplace transform (LT) of the Prabhakar function $E_{\alpha,\beta}^{\gamma}(z)$ is not known but it is possible to explicitly represent the LT of a generalization of this function since
\begin{equation}\label{eq:ML3_LT}
	{\mathcal L} \left( t^{\beta-1} E^{\gamma}_{\alpha,\beta}( t^{\alpha} z)   ; s \right) = \frac{s^{\alpha\gamma-\beta}}{(s^{\alpha} - z)^{\gamma}}
	, \quad \Re(s)>0 \, \text{ and } \, |s|^{\alpha} > |z|  .
\end{equation}

The very simple analytical formulation of the LT, especially when compared to the series representation (\ref{eq:Prabhakar}), makes the LT a reliable tool for the practical computation of the Prabhakar function after numerical inversion of (\ref{eq:ML3_LT}), as already discussed in \cite{Garrappa2015_SIAM}.

\subsubsection*{Integrals and derivatives}

Formulas for integration and derivation of the Prabhakar function can be easily obtained by means of the LT or thanks to a term-by-term integration or differentiation \cite{Prabhakar1971,KilbasSaigoSaxena2004,SaxenaSaigo2005}. In particular, we recall here that for any $t \in \Rset_{+}$ and $\lambda \in \Cset$ it is  
\[
	\int_{0}^{t} u^{\beta-1} E_{\alpha,\beta}^{\gamma}(\lambda u^{\alpha}) \du u = t^{\beta} E_{\alpha,\beta+1}^{\gamma}(\lambda t^{\alpha})
\]
and for any $m \in \Nset$
\[
	\frac{\du^{m}}{\du t^{m}} \left[ t^{\beta-1} E_{\alpha,\beta}^{\gamma}(\lambda t^{\alpha}) \right] = t^{\beta-m-1} E_{\alpha,\beta-m}^{\gamma} (\lambda t^{\alpha}) ;
\]
moreover, it is also possible to see that
\[
	\frac{\du }{\du z} E_{\alpha,\beta}^{\gamma}(z) = \gamma E_{\alpha,\alpha+\beta}^{\gamma+1}(z)
\]
and this formula can be easily iterated to obtain 
\begin{equation}\label{eq:DerivPrabhakar1}
	\frac{\du^m }{\du z^m} E_{\alpha,\beta}^{\gamma}(z) = \gamma(\gamma+1) \cdots (\gamma+m-1) E_{\alpha,m\alpha+\beta}^{\gamma+m}(z) .
\end{equation}

The Dzhrbashyan's formula \cite{Djrbashian1993} for the first derivative of the Mittag-Leffler function can be extended to the Prabhakar  function as stated in the following result.

\begin{proposition}
Let $z\not=0$. Then it is
\begin{equation}\label{eq:DerivPrabhakar2}
	\frac{\du }{\du z} E_{\alpha,\beta}^{\gamma}(z) = \frac{E_{\alpha,\beta-1}^{\gamma}(z) + (1-\beta)E_{\alpha,\beta}^{\gamma}(z)}{\alpha z} .
\end{equation}
\end{proposition}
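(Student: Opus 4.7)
The strategy I would take is to differentiate the series (\ref{eq:Prabhakar}) term by term, which is legitimate since $E_{\alpha,\beta}^{\gamma}$ is entire, and then algebraically rearrange the resulting expression into a combination of two Prabhakar functions whose second parameter is shifted. The factor $1/z$ appearing on the right-hand side of (\ref{eq:DerivPrabhakar2}) suggests the natural first move: after differentiating once, I would multiply through by $\alpha z$ so as to absorb the $z^{k-1}$ coming from $(z^k)'$ and restore the homogeneous indexing $z^k$.

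Term-by-term differentiation yields
\[
  \frac{\du}{\du z} E_{\alpha,\beta}^{\gamma}(z)
  = \frac{1}{\Gamma(\gamma)} \sum_{k=1}^{\infty}
    \frac{\Gamma(\gamma+k)\, k\, z^{k-1}}{k!\,\Gamma(\alpha k + \beta)},
\]
so multiplying by $\alpha z$ and re-indexing from $k=0$ (the $k=0$ term being zero) produces a series in which each summand carries the factor $\alpha k$. The key algebraic identity I would exploit is the decomposition $\alpha k = (\alpha k + \beta - 1) - (\beta - 1)$, which splits the series into two pieces. In the first piece, the factor $\alpha k + \beta - 1$ combines with $\Gamma(\alpha k + \beta)$ via the functional equation $\Gamma(x+1) = x\Gamma(x)$ to give $1/\Gamma(\alpha k + \beta - 1)$, which is exactly the $k$-th coefficient of $E_{\alpha,\beta-1}^{\gamma}(z)$. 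The second piece is immediately $-(\beta-1)E_{\alpha,\beta}^{\gamma}(z) = (1-\beta)E_{\alpha,\beta}^{\gamma}(z)$. Dividing by $\alpha z$, which is permissible because $z \neq 0$ and $\real(\alpha) > 0$, yields (\ref{eq:DerivPrabhakar2}).

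I do not foresee any serious obstacle: all series involved are absolutely convergent on $\Cset$, so every rearrangement and re-indexing is routine. The only point worth a brief check is the case $\beta = 1$, where $\Gamma(\beta - 1)$ is singular; here the argument remains valid with the standard convention $1/\Gamma(0) = 0$, which is consistent with the factor $(\alpha k + \beta - 1)$ vanishing at $k=0$ before one passes to the reciprocal gamma. An alternative route through the chain of identities (\ref{eq:DerivPrabhakar1}) with $m=1$ together with (\ref{eq:ReductionGamma}) produces a relation for the derivative, but in terms of $E_{\alpha,\alpha+\beta-1}^{\gamma}$ and $E_{\alpha,\alpha+\beta}^{\gamma}$ rather than of $E_{\alpha,\beta-1}^{\gamma}$ and $E_{\alpha,\beta}^{\gamma}$; this makes the direct series manipulation outlined above the most transparent path to the stated identity.
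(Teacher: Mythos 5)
Your proposal is correct and is essentially the paper's own proof run in the opposite direction: the paper starts from $E_{\alpha,\beta-1}^{\gamma}(z)+(1-\beta)E_{\alpha,\beta}^{\gamma}(z)$ and uses $\frac{1}{\Gamma(\alpha k+\beta-1)}=\frac{\alpha k+\beta-1}{\Gamma(\alpha k+\beta)}$ to recover $\alpha z$ times the differentiated series, which is exactly your decomposition $\alpha k=(\alpha k+\beta-1)-( \beta-1)$ read backwards. Your remark on the $\beta=1$ case via the convention $1/\Gamma(0)=0$ is a small bonus the paper does not spell out, but the argument is the same.
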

\begin{proof}
By a term-by-term derivation it is immediate to see that
\[
	\frac{\du }{\du z} E_{\alpha,\beta}^{\gamma}(z) 
	= \frac{1}{\Gamma(\gamma)} \sum_{k=1}^{\infty} \frac{\Gamma(\gamma+k) z^{k-1}}{(k-1)! \Gamma(\alpha k +\beta)} .
\]

Moreover
\[
	E_{\alpha,\beta-1}^{\gamma}(z) + (1-\beta)E_{\alpha,\beta}^{\gamma}(z)
	= \frac{1}{\Gamma(\gamma)} \sum_{k=0}^{\infty} \frac{\Gamma(\gamma+1)}{k!} \left[ \frac{z^{k}}{\Gamma(\alpha k + \beta-1)} + \frac{(1-\beta)z^{k}}{\Gamma(\alpha k + \beta)} \right]
\]
and, after replacing 
\[
	\frac{1}{\Gamma(\alpha k + \beta-1)} = \frac{(\alpha k + \beta-1)}{\Gamma(\alpha k+\beta)}
\]
the proof immediately follows. 
\end{proof}

The comparison of formulas (\ref{eq:DerivPrabhakar1}) and (\ref{eq:DerivPrabhakar2}) allows us to derive a further formula, in addition to (\ref{eq:ReductionGamma}), for the reduction in the third parameter $\gamma$ according to the following result whose proof is immediate and hence omitted.
\begin{corollary}
Let $z\not=0$. Than it is
\[
	E_{\alpha,\beta}^{\gamma+1}(z) = 
	\frac{E_{\alpha,\beta-\alpha-1}^{\gamma}(z) + (1-\beta+\alpha) E_{\alpha,\beta-\alpha}^{\gamma}(z)}{\alpha \gamma z} .
\]
\end{corollary}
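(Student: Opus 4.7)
The plan is to combine the two derivative formulas (\ref{eq:DerivPrabhakar1}) and (\ref{eq:DerivPrabhakar2}) so as to eliminate the derivative and leave an identity purely between Prabhakar functions. The trick is that (\ref{eq:DerivPrabhakar1}) expresses $\frac{\du}{\du z}E_{\alpha,\beta}^{\gamma}(z)$ as a Prabhakar function with upper parameter $\gamma+1$, while (\ref{eq:DerivPrabhakar2}) expresses the very same derivative in terms of Prabhakar functions with upper parameter still $\gamma$. Setting the two right-hand sides equal will therefore produce a relation of exactly the type required by the statement.

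Concretely, I would first rewrite (\ref{eq:DerivPrabhakar1}) in the special case $m=1$, then perform the parameter shift $\beta \mapsto \beta-\alpha$ to obtain
\[
	\frac{\du}{\du z} E_{\alpha,\beta-\alpha}^{\gamma}(z) = \gamma \, E_{\alpha,\beta}^{\gamma+1}(z),
\]
so that the target function $E_{\alpha,\beta}^{\gamma+1}(z)$ appears isolated (up to the factor $\gamma$). Next, I would apply (\ref{eq:DerivPrabhakar2}) with the same shift $\beta \mapsto \beta-\alpha$, obtaining
\[
	\frac{\du}{\du z} E_{\alpha,\beta-\alpha}^{\gamma}(z) = \frac{E_{\alpha,\beta-\alpha-1}^{\gamma}(z) + (1-\beta+\alpha)E_{\alpha,\beta-\alpha}^{\gamma}(z)}{\alpha z}.
\]
Equating the two expressions for the derivative and dividing by $\gamma$ (which is permitted since the right-hand side of the statement also contains $\gamma$ in the denominator) yields the claim, with the condition $z\neq 0$ inherited from the use of (\ref{eq:DerivPrabhakar2}).

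There is essentially no obstacle: the proof is a one-line manipulation consisting of a parameter shift in the two already-proved derivative identities. The only thing worth checking carefully is that the shift $\beta \mapsto \beta-\alpha$ is legitimate in both formulas, which it is since $\beta$ enters them only through gamma functions in the series and no positivity/reality condition on $\beta$ has been imposed. This is exactly why the authors state that the proof is immediate and omitted.
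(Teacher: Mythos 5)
Your proposal is correct and follows exactly the route the paper indicates: the authors omit the proof but state that it comes from comparing (\ref{eq:DerivPrabhakar1}) and (\ref{eq:DerivPrabhakar2}), and your shift $\beta\mapsto\beta-\alpha$ followed by equating the two expressions for $\frac{\du}{\du z}E_{\alpha,\beta-\alpha}^{\gamma}(z)$ is precisely that comparison, carried out correctly.
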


Finally we see that the Riemann-Liouville integral ${}_{0}J^{\rho}_{t}$ and the Riemann-Liouville derivative ${}_{0}D^{\rho}_{t}$ of order $\rho>0$ of the Prabhakar function are respectively given by
\[
	{}_{0}J^{\rho}_{t} \left[ t^{\beta-1} E_{\alpha,\beta}^{\gamma}(\lambda t^{\alpha}) \right] = 
	\left[ t^{\beta+\rho-1} E_{\alpha,\beta+\rho}^{\gamma}(\lambda t^{\alpha}) \right]
\]
and
\[
	{}_{0}D^{\rho}_{t} \left[ t^{\beta-1} E_{\alpha,\beta}^{\gamma}(\lambda t^{\alpha}) \right] = 
	\left[ t^{\beta-\rho-1} E_{\alpha,\beta-\rho}^{\gamma}(\lambda t^{\alpha}) \right] .
\]

\subsubsection*{Complete monotonicity}

We recall that for a function $f:(0,+\infty) \to \Rset$, the complete monotonicity (CM) states that $f$ has derivatives of all orders on $(0,+\infty)$ and $(-1)^{k}f^{(k)}(t) \ge 0$ for any $k\in\Nset$ and $t > 0$.

This property is of particular importance for the physical acceptability of related models since, as discussed for instance by Hanyga \cite{Hanyga2005b}, the CM ensures the monotone decay of the energy in isolated systems. For this reason, and in view of the use in dielectric models, the CM of the Prabhakar function has been studied in several works \cite{CapelasMainardiVaz2011, MainardiGarrappa2015_JCP,TomovskiPoganySrivastava2014} and nowadays the most general result states that the Prabhakar function is CM for the following parameters
\[
	0 < \alpha \le 1 , \quad 0 < \alpha \gamma \le \beta \le 1.
\]

\subsubsection*{Relationship with Fox-Wright functions}

It is immediate to verify that the Prabhakar function is a special instance of the Fox--Wright functions \cite{Fox1928,Wright1935}, a very general class of multi-parameter functions defined by
\begin{equation}\label{eq:FoxWright}
	_{p}\Psi_{q}(z) \equiv \, 
	_{p}\Psi_{q} \left[ \begin{array}{c} (\rho_1,a_1),\dots,(\rho_p,a_p) \\ (\sigma_1,b_1),\dots,(\sigma_q,b_q) \end{array} ; z \right] = \sum_{k=0}^{\infty} \frac{z^{k}}{k!} \frac
		{\displaystyle\prod_{r=1}^{p} \Gamma(\rho_{r}k + a_{r}) } 
		{\displaystyle\prod_{r=1}^{q} \Gamma(\sigma_{r} k + b_{r}) }
	,
\end{equation}
where $p$ and $q$ are integers and $\rho_r, a_r, \sigma_r,b_r$ are real or complex parameters. Indeed, we can easily verify that $E_{\alpha,\beta}^{\gamma}(z)$ is proportional to $_{1}\Psi_{1}(z)$ since
\begin{equation}\label{eq:PrabhakarFoxWright}
	E_{\alpha,\beta}^{\gamma}(z) 
	= \frac{1}{\Gamma(\gamma)} \, _{1}\Psi_{1} \left[ \begin{array}{c} (1,\gamma) \\ (\alpha,\beta) \end{array} ; z \right] .
\end{equation}

\subsubsection*{Fractional integral and derivative involving the Prabhakar function}\label{S:Operators}

In 1967 Havriliak and Negami \cite{HavriliakNegami1967} formulated a new model of complex relaxation in order to take into account asymmetry
and broadness in the shape of the permittivity spectrum of some polymers. The complex susceptibility in the Havriliak-Negami model is characterized by one relaxation time $\tau_{\star}$ and two real powers $\alpha$ and $\gamma$ 
\begin{equation}\label{eq:HN}
	\hat{\chi}_{\text{\tiny{HN}}}(\iu \omega) = \frac{1}{\left( 1 + \bigl( \iu \tau_{\star} \omega \bigr)^{\alpha} \right)^{\gamma}} .
\end{equation}

The corresponding equations for the time-domain evolution of response and relaxation functions are hence expressed in terms of the operators \cite{GarrappaMainardiMaione2016} 
\[
	\bigl({}_{0}J^{\alpha}_{t} + \lambda \bigr)^{\gamma}
	, \quad \quad
	\bigl({}_{0}D^{\alpha}_{t} + \lambda \bigr)^{\gamma}
	, 
\]
where $\lambda = \tau_{\star}^{-\alpha}$ and ${}_{0}J^{\alpha}_{t}$ and ${}_{0}D^{\alpha}_{t}$ denote, respectively, the fractional integral and the fractional derivative of Riemann-Liouville type \cite{Diethelm2010}. In the past several efforts have been done to properly characterize the above operators (see, for instance, \cite{NigmatullinRyabov1997,NovikovWojciechowskiKomkovaThiel2005}). One of the most powerful is however the approach based on the generalized Prabhakar function 
\[
	e_{\alpha,\beta}^{\gamma}(t;\lambda) = t^{\beta-1} E_{\alpha,\beta}^{\gamma}(t^{\alpha}\lambda)
\]

The corresponding operators can be indeed represented by means of the convolution integrals with $e_{\alpha,\beta}^{\gamma}$ in the kernel \cite{PolitoTomovski2016}
\[
	\bigl({}_{0}J^{\alpha}_{t} + \lambda\bigr)^{\gamma} f(t) 
	\equiv e_{\alpha,\alpha\gamma}^{\gamma} (t;-\lambda) \star f(t) 		
	= \int_{0}^{t} e_{\alpha,\alpha\gamma}^{\gamma} (t-u;-\lambda)  f(u) \, \du u,
\]
and
\[
	\bigl({}_{0}D^{\alpha}_{t} + \lambda\bigr)^{\gamma} f(t)
	\equiv \frac{\du^m}{\du t^m} \left( e_{\alpha,\alpha\gamma}^{\gamma} (t;-\lambda) \star f(t) 	\right)
	= \frac{\du^m}{\du t^m} \int_{0}^{t} e_{\alpha,m-\alpha\gamma}^{-\gamma} (t-u;-\lambda) f(u) \, \du u , 
\]
with $m = \left\lceil \alpha \gamma \right\rceil$; although we keep the presentation of the above operators at the most general level, in practical applications $\alpha \gamma$ is confined to the interval $(0,1)$ and hence one should just consider $m=1$. 

A regularization in the Caputo sense of the Prabhakar derivative has been recently proposed in \cite{GarraGorenfloPolitoTomovski2014}
\begin{equation}\label{CP}
	{}^{\text{\tiny{C}}}{}{\bigl({}_{0}D^{\alpha}_{t} + \lambda \bigr)^{\gamma}}  f(t)
	\equiv e_{\alpha,\alpha\gamma}^{\gamma} (t;-\lambda) \star \frac{\du^m}{\du t^m} f(t) 
	= \int_{0}^{t} e_{\alpha,m-\alpha\gamma}^{-\gamma} (t-u;-\lambda) f^{(m)}(u) \, \du u ,
\end{equation}
where we followed the notation later introduced in \cite{Garrappa2016_CNSNS,GarrappaMainardiMaione2016}. 

Between the two derivatives $\bigl({}_{0} D^{\alpha}_{t} + \lambda\bigr)^{\gamma}$ and ${}^{\text{\tiny{C}}}{}{\bigl({}_{0}D^{\alpha}_{t} + \lambda \bigr)^{\gamma}}$ there exist a relationship, similar to the one existing between the Riemann-Liouville and the Caputo fractional derivatives, given by
\begin{equation}\label{eq:HN_RL_Caputo_Relat}
	{}^{\text{\tiny{C}}}{}{\bigl({}_{0}D^{\alpha}_{t} + \lambda \bigr)^{\gamma}} f(t)
	= \bigl({}_{0}D^{\alpha}_{t} + \lambda\bigr)^{\gamma} \left( f(t) - \sum_{j=0}^{m-1} \frac{t}{j!} f(0^+) \right) .
\end{equation}

A characterization in terms of Gr\"{u}nwald-{L}etnikov differences of the above operators has been successively studied in \cite{Garrappa2016_CNSNS}. 

We refer the reader to \cite{Prabhakar1971,GarraGorenfloPolitoTomovski2014}, and in particular to the appendix in \cite{GarrappaMainardiMaione2016}, for a more detailed discussion concerning these operators, their applications and properties.

\section{Asymptotic expansion for large arguments}\label{S:AsymptoticExpansion}

To derive asymptotic expansions for large arguments in the whole complex plane it is useful to exploit, thanks to the relationship (\ref{eq:PrabhakarFoxWright}), some of the results derived for the Fox--Wright functions \cite{Fox1928,Wright1935}.

Asymptotic expansions of the Fox--Wright functions were mainly investigated by Edward Maitland Wright who first provided in \cite{Wright1935} an account of the main results and successively published the corresponding proofs in \cite{Wright1940} (other results are also discussed in \cite{Wrigth1940_PTRSL}). Some other related results were successively presented by Braaksma in the long paper \cite{Braaksma1964} mainly dedicated to $H$-functions.

On the basis of these works, the asymptotic expansion of the Fox--Wright functions for large arguments has been studied by Paris \cite{Paris2010} in a paper describing an efficient algorithm for the derivation of the coefficients in the asymptotic expansion of Fox--Wright functions. 

Their use for the Prabhakar function is not however immediate since the coefficients are not provided in an explicit form because of their dependence on multiple parameters.

We therefore review and summarize the work in \cite{Paris2010} and we provide the explicit representation for the asymptotic expansions of the Prabhakar function; we hence give the first few corresponding coefficients, although the description of the algorithm for their derivation, which involves more technical issues, is confined to the Appendix at the end of the paper. We also propose a more in-depth and original discussion of the behavior on the negative semi-axis.

\subsection{Expansion in the whole complex plane} \label{S2} 

Let us assume $\alpha$ be real positive and $\beta$ and $\gamma$ arbitrary complex numbers with $\gamma\not=-1,-2,\dots$. By following the approach proposed in \cite{Paris2010}, we introduce the functions
\[
	H(z) = \frac{z^{-\gamma}}{\Gamma(\gamma)} \sum_{k=0}^{\infty} \frac{(-1)^{k} \Gamma(k+\gamma)}{k! \Gamma(\beta-\alpha(k+\gamma))} z^{-k}
\]
and
\[
	F(z) = \frac{1}{\Gamma(\gamma)} \eu^{z^{1/\alpha}} z^{\frac{\gamma-\beta}{\alpha}} \frac{1}{\alpha^{\gamma}} \sum_{k=0}^{\infty} c_{k} z^{-\frac{k}{\alpha}} ,
\]
with the coefficients $c_{k}$ of $F(z)$ (which depend on the parameters $\alpha$, $\beta$ and $\gamma$) obtained as the coefficients in the inverse factorial expansion of 
\begin{equation}\label{eq:InverseFactorialExpansion}
	F_{\alpha,\beta}^{\gamma}(s) := \frac{\Gamma(\gamma+s)\Gamma(\alpha s + \psi)}{\Gamma(s+1)\Gamma(\alpha s + \beta)}
	= \alpha^{1-\gamma} \left( 1 + \sum_{j=1}^{\infty} \frac{c_{j}}{(\alpha s + \psi )_j} \right)
\end{equation}
for $|s|\to \infty$ in $|\arg(s)| \le \pi - \epsilon$ and any arbitrarily small $\epsilon >0$; as usual, $(x)_j = x(x+1)\cdots(x+j-1)$ denotes the Pochhammer symbol and, just for shortness, we put
\[
	\psi = 1-\gamma+\beta .
\]

The main results for the asymptotic expansions of the Prabhakar function can be given thanks to the following theorems (for the proofs see \cite{Wright1935}, \cite[Section 2.3]{ParisKaminski2001} or \cite[Theorem 1, 2 and 3]{Paris2010}).

\begin{theorem}\label{thm:Expansion1}
Let $0< \alpha <2$. Then
\[
	E_{\alpha,\beta}^{\gamma}(z) \sim
	\left\{\begin{array}{ll}
		F(z) + H(ze^{\mp \pi i}) \quad & \textrm{if } |\arg(z)| \le \frac{1}{2} \pi \alpha \\
		H(ze^{\mp \pi i}) & \textrm{if } |\arg(-z)| < \frac{1}{2} \pi (2-\alpha)  \
	\end{array} \right.
\]
as $|z|\to \infty$ with the sign in $H(ze^{\mp \pi i})$ chosen according as $z$ lies  in the upper or lower half-plane respectively.
\end{theorem}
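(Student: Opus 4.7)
The plan is to exploit the Fox--Wright identity (\ref{eq:PrabhakarFoxWright}) in order to reduce the question to an asymptotic analysis of ${}_{1}\Psi_{1}$, for which the Mellin--Barnes machinery developed in \cite{Paris2010,ParisKaminski2001} applies. Concretely, for $|\arg(-z)|<\pi$ one writes
\begin{equation*}
  E_{\alpha,\beta}^{\gamma}(z) \;=\; \frac{1}{2\pi\iu\,\Gamma(\gamma)}\int_{\mathcal{C}}\frac{\Gamma(-s)\Gamma(\gamma+s)}{\Gamma(\beta+\alpha s)}(-z)^{s}\,\du s,
\end{equation*}
where $\mathcal{C}$ is an indented vertical contour separating the poles $s=k\ge 0$ of $\Gamma(-s)$ from those of $\Gamma(\gamma+s)$ at $s=-\gamma-k$; closing the contour to the right recovers the defining series (\ref{eq:Prabhakar}), which certifies the representation.

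First I would push $\mathcal{C}$ to the left. Each pole $s=-\gamma-k$ encountered contributes a residue whose sum yields exactly $H(z\eu^{\mp\iu\pi})$, with the sign dictated by the branch of $(-z)^{s}$ used on either side of the negative real axis. This already accounts for the purely algebraic piece appearing in both cases of the theorem, and reduces the remaining task to controlling the residual Mellin--Barnes integral far to the left.

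To handle the residual integral I would substitute the inverse factorial expansion (\ref{eq:InverseFactorialExpansion}) of $F_{\alpha,\beta}^{\gamma}(s)$ into the integrand, breaking it into a finite sum of Pochhammer quotients $1/(\alpha s+\psi)_{j}$ plus a remainder. Each quotient, combined with $(-z)^{s}$, can be evaluated against a Hankel-type loop through standard gamma-function identities; the sum of these contributions produces precisely $\eu^{z^{1/\alpha}}z^{(\gamma-\beta)/\alpha}\alpha^{-\gamma}\Gamma(\gamma)^{-1}\sum_{k}c_{k}z^{-k/\alpha}$, i.e., the function $F(z)$, with coefficients $c_{k}$ inherited algorithmically from (\ref{eq:InverseFactorialExpansion}) and tabulated in the Appendix.

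The main obstacle is the Stokes analysis required to explain the sectorial dichotomy: one must verify that $F(z)$ is exponentially subdominant on $|\arg(-z)|<\pi(2-\alpha)/2$, where $\real(z^{1/\alpha})<0$, so that it is absorbed into the error term of the $H$-series, while it becomes the dominant contribution on the complementary sector $|\arg(z)|\le\pi\alpha/2$, where $\real(z^{1/\alpha})\ge 0$. The two sectors abut precisely along the anti-Stokes rays $\arg(z)=\pm\pi\alpha/2$, across which one must track the switching-on of the exponential term. This requires uniform Stirling-type estimates for the remainder of the inverse factorial expansion along contours escaping to infinity, for which the constraint $0<\alpha<2$ is essential: it guarantees that the poles of $\Gamma(\beta+\alpha s)$ remain outside the region being deformed and that $|(-z)^{s}|$ stays controlled at infinity. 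These uniform estimates form the technical heart of \cite{Paris2010}, and specializing them to $p=q=1$ with parameters $(\rho_{1},a_{1})=(1,\gamma)$ and $(\sigma_{1},b_{1})=(\alpha,\beta)$ delivers the stated expansion.
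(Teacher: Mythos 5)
The paper does not actually prove Theorem~\ref{thm:Expansion1}: it imports it from Wright, from \cite[Section 2.3]{ParisKaminski2001} and from \cite{Paris2010} via the identification (\ref{eq:PrabhakarFoxWright}). Your sketch reconstructs precisely that argument --- Mellin--Barnes representation, left-hand residues giving $H(z\eu^{\mp\pi\iu})$, inverse factorial expansion (\ref{eq:InverseFactorialExpansion}) converting the residual integral into the exponential expansion $F(z)$, and a Stokes analysis explaining why $F$ is retained only for $|\arg(z)|\le\tfrac12\pi\alpha$ --- so in substance it is the same route the paper relies on, and the outline is sound.

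One concrete misstatement should be fixed: you justify the hypothesis $0<\alpha<2$ by saying it keeps ``the poles of $\Gamma(\beta+\alpha s)$'' out of the deformation region. But $\Gamma(\beta+\alpha s)$ sits in the \emph{denominator} of the integrand, and $1/\Gamma$ is entire, so there are no such poles to avoid; the left-hand poles come only from $\Gamma(\gamma+s)$ (whence the restriction $\gamma\neq -1,-2,\dots$ for these to be simple and to produce the series $H$). The actual role of $\alpha<2$ is sectorial: the exponential factor $\eu^{z^{1/\alpha}}$ grows only for $|\arg(z)|<\tfrac12\pi\alpha$, and this sector is a proper subsector of $|\arg(z)|\le\pi$ exactly when $\alpha<2$, so a \emph{single} exponential term suffices and is switched off on the complementary sector $|\arg(-z)|<\tfrac12\pi(2-\alpha)$. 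When $\alpha\ge 2$ the rotated branches $F(z\eu^{2\pi\iu r})$ cease to be negligible, which is why Theorems~\ref{thm:Expansion2} and~\ref{thm:Expansion3} carry extra terms. With that correction your argument aligns with the cited proofs.
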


\begin{theorem}\label{thm:Expansion2}
Let $\alpha =2$ and $|\arg(z)| \le \pi$. Then
\[
	E_{\alpha,\beta}^{\gamma}(z) \sim F(z) + F(z\eu^{\mp 2\pi\iu}) + H(z\eu^{\mp \pi\iu})
	, \quad \text{as } |z|\to \infty,
\]
with the sign in $F(z\eu^{\mp 2\pi\iu})$ and $H(ze^{\mp \pi i})$ chosen according as $z$ lies in the upper or lower half-plane respectively.
\end{theorem}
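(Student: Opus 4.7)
The plan is to follow the Mellin--Barnes/inverse-factorial approach of Paris \cite{Paris2010}, treating $\alpha = 2$ as the boundary case of the sectorial regime already handled in Theorem \ref{thm:Expansion1}. Using the equivalence (\ref{eq:PrabhakarFoxWright}), I would start from the Mellin--Barnes representation
\[
	E_{\alpha,\beta}^{\gamma}(z) = \frac{1}{2\pi\iu\,\Gamma(\gamma)}\int_L \frac{\Gamma(s+\gamma)\Gamma(-s)}{\Gamma(\alpha s + \beta)}(-z)^s\,\du s,
\]
where $L$ is a vertical contour separating the poles of $\Gamma(s+\gamma)$ from those of $\Gamma(-s)$, and insert the inverse factorial expansion (\ref{eq:InverseFactorialExpansion}) of the ratio $\Gamma(s+\gamma)/\Gamma(\alpha s + \beta)$. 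This decomposes the integrand into a finite sum of elementary Barnes-type integrals weighted by the coefficients $c_j$, plus a remainder that one needs to show is asymptotically negligible.

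I would then extract the two qualitative contributions separately. Displacing the contour to the left produces residues at $s = -\gamma - k$, $k=0,1,2,\dots$, which assemble into the algebraic series $H(z\eu^{\mp\pi\iu})$; the signs originate from the branch choice in $(-z)^{s}=\eu^{s\log(-z)}$, which depends on whether $z$ lies in the upper or lower half-plane. The exponential contribution arises from a saddle-point evaluation of the resolved Barnes integrals: the saddle of $\Gamma(-s)(-z)^{s}$ on the positive real axis, treated by the method of steepest descent and combined with the $c_j$, reproduces $F(z)$. For $0 < \alpha < 2$, this is the only exponentially dominant saddle in $|\arg z|\le \tfrac{1}{2}\pi\alpha$, which gives exactly the statement of Theorem \ref{thm:Expansion1}.

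The essentially new feature at $\alpha = 2$ is a Stokes coincidence: the Stokes rays $\arg z = \pm\tfrac{1}{2}\pi\alpha = \pm\pi$ coincide with the branch-cut rays, so a second saddle becomes asymptotically active throughout the whole sector $|\arg z|\le\pi$. This additional saddle is obtained from the first via the analytic continuation $z\mapsto z\eu^{\mp 2\pi\iu}$, with the sign dictated by the half-plane of $z$, and produces the extra exponential $F(z\eu^{\mp 2\pi\iu})$. One can see the effect concretely: since $(z\eu^{\mp 2\pi\iu})^{1/2} = -z^{1/2}$, the two $F$-terms encode respectively $\eu^{+z^{1/2}}$ and $\eu^{-z^{1/2}}$, recovering for instance the familiar identity $E_{2,1}^{1}(z) = \cosh(\sqrt{z})$ from the leading orders.

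The main obstacle is the bookkeeping in this doubled exponential regime. For $\alpha < 2$ the second saddle is exponentially subdominant and may be absorbed into the remainder, but at $\alpha = 2$ both saddles contribute at the same order and must be retained. The technical heart of the argument is therefore a uniform estimate of the inverse-factorial-expansion remainder on contours pushed toward $\arg s = \pm\pi$, showing that it is of smaller order than both $F(z)$ and $F(z\eu^{\mp 2\pi\iu})$ uniformly in the closed sector $|\arg z|\le\pi$. These are precisely the estimates carried out in \cite{Wright1935,ParisKaminski2001,Paris2010} for the general Fox--Wright case, which I would adapt, noting that the $_{1}\Psi_{1}$ structure of the Prabhakar function leaves only one sequence of algebraic poles and hence simplifies the combinatorics of the residue calculus.
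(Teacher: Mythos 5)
The paper does not prove Theorem \ref{thm:Expansion2} itself but imports it from \cite{Wright1935}, \cite[Section 2.3]{ParisKaminski2001} and \cite{Paris2010}, and your outline --- the Mellin--Barnes representation of the $_{1}\Psi_{1}$ form (\ref{eq:PrabhakarFoxWright}), the inverse factorial expansion (\ref{eq:InverseFactorialExpansion}) generating the exponential expansion $F$, the left-hand poles at $s=-\gamma-k$ generating $H(z\eu^{\mp\pi\iu})$, and the retention at $\alpha=2$ of the second exponential determination $F(z\eu^{\mp2\pi\iu})$ --- is precisely the argument of those references, so this is essentially the same approach. The only loose phrase is that the two exponential terms ``contribute at the same order throughout the sector'': since $\real\bigl(z^{1/2}\bigr)>0>\real\bigl(-z^{1/2}\bigr)$ for $|\arg z|<\pi$, the term $F(z\eu^{\mp2\pi\iu})$ is recessive in the open sector and becomes comparable (purely oscillatory) only on the rays $\arg z=\pm\pi$, which is exactly why it must be kept for an expansion uniformly valid up to and including the negative real axis.
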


\begin{theorem}\label{thm:Expansion3}
Let $\alpha >2$ and $|\arg(z)| \le \pi$. Then
\[
	E_{\alpha,\beta}^{\gamma}(z) \sim \sum_{r=-P}^{P} F(z\eu^{2\pi\iu r}) 
	, \quad \text{as } |z|\to \infty,
\]
with $P$ the integer such that $2P+1$ is the smallest odd integer satisfying $2P+1>\frac{1}{2} \alpha$.
\end{theorem}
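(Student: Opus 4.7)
The plan is to follow the Mellin--Barnes approach used in \cite{Paris2010} and \cite{ParisKaminski2001}, applied to the $_{1}\Psi_{1}$ representation (\ref{eq:PrabhakarFoxWright}) of the Prabhakar function. The starting point is the Mellin--Barnes integral
\[
	E_{\alpha,\beta}^{\gamma}(z) = \frac{1}{\Gamma(\gamma)} \cdot \frac{1}{2\pi\iu} \int_{L} \frac{\Gamma(\gamma+s)\Gamma(-s)}{\Gamma(\alpha s + \beta)} (-z)^{s} \, \du s ,
\]
where $L$ is a contour separating the poles of $\Gamma(-s)$ at $s=0,1,2,\dots$ from those of $\Gamma(\gamma+s)$ at $s=-\gamma,-\gamma-1,\dots$. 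Closing $L$ to the right reproduces the defining series (\ref{eq:Prabhakar}); to obtain the large-$|z|$ behavior one instead substitutes the inverse factorial expansion (\ref{eq:InverseFactorialExpansion}) into the integrand, which converts the ratio of gamma functions into a form for which the steepest descent method can be applied.

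Next, I would identify the saddle points of the resulting exponential integrand. These sit at $s \sim z^{1/\alpha} \eu^{2\pi\iu r/\alpha}$ for $r\in\Zset$, and each saddle, after a standard Laplace-type evaluation, produces exactly a term of the shape $F(z\eu^{2\pi\iu r})$ defined before Theorem \ref{thm:Expansion1}, because the $\eu^{z^{1/\alpha}}$ prefactor and the coefficients $c_{k}$ from (\ref{eq:InverseFactorialExpansion}) are precisely what arises from the local expansion around the $r$-th saddle. A saddle contributes exponentially to the expansion iff the real part of the exponent, proportional to $|z|^{1/\alpha}\cos\bigl((\arg z + 2\pi r)/\alpha\bigr)$, is positive, that is $|\arg z + 2\pi r| < \pi\alpha/2$. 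For $|\arg z| \le \pi$ this is equivalent to $|r|\le P$ with $2P+1$ the smallest odd integer exceeding $\alpha/2$, which is exactly the combinatorial count in the statement.

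The remaining ingredient is to show that the algebraic contribution $H(z\eu^{\mp\iu\pi})$, which was visible in Theorems \ref{thm:Expansion1} and \ref{thm:Expansion2}, now disappears into the error. Since $H$ decays only algebraically in $|z|$ while each $F(z\eu^{2\pi\iu r})$ carries an exponentially large factor for $|r|\le P$ (because $\alpha>2$ forces at least the principal branch $r=0$ to have $\cos(\arg z /\alpha)>0$ throughout $|\arg z|\le\pi$), the algebraic piece is absorbed into the remainder of every saddle contribution. This is the main conceptual simplification in the $\alpha>2$ regime.

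The technical heart, and what I expect to be the main obstacle, is the bookkeeping on the contour deformation: one must verify that as $L$ is deformed through the steepest-descent configuration it crosses exactly the $2P+1$ relevant saddles and that the arcs joining consecutive saddles, together with the ``tails'' at infinity, yield an error of smaller order than the smallest retained $F(z\eu^{2\pi\iu r})$. This is controlled by the uniform remainder estimates for the inverse factorial expansion (\ref{eq:InverseFactorialExpansion}) proved in \cite[Section 2.3]{ParisKaminski2001}, whose hypotheses transfer to the Prabhakar parameters $(1,\gamma;\alpha,\beta)$ without modification, so the proof ultimately reduces to invoking those estimates and summing the saddle-point contributions.
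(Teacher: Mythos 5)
Your outline is essentially the route the paper itself relies on: the paper does not prove Theorem \ref{thm:Expansion3} directly but obtains it by identifying $E_{\alpha,\beta}^{\gamma}$ with $\Gamma(\gamma)^{-1}\,{}_1\Psi_1$ via (\ref{eq:PrabhakarFoxWright}) and specializing \cite[Theorem 3]{Paris2010} (equivalently \cite[Section 2.3]{ParisKaminski2001}) with the Fox--Wright parameter $\kappa=1+\alpha-1=\alpha$, so that Paris's cutoff $2P+1>\kappa/2$ becomes $2P+1>\alpha/2$; your Mellin--Barnes representation, the insertion of the inverse factorial expansion (\ref{eq:InverseFactorialExpansion}) to produce the $c_k$ and hence the $F(z\eu^{2\pi\iu r})$ terms, and the final appeal to the remainder estimates of \cite{ParisKaminski2001} are precisely the machinery behind that cited theorem. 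One point deserves correction: the criterion $|\arg z+2\pi r|<\pi\alpha/2$ for exponential growth is \emph{not} equivalent to $|r|\le P$ when $|\arg z|\le\pi$. The implication that matters is one-directional: if $|r|\ge P+1$ then $|\arg z+2\pi r|\ge 2\pi|r|-\pi\ge\pi(2P+1)>\pi\alpha/2$, so every discarded saddle is exponentially small; conversely, for a given $\arg z$ some of the retained saddles with $|r|\le P$ may also be exponentially small (they are kept because they become significant as $\arg z$ sweeps $[-\pi,\pi]$, and are harmless where they are recessive). With that adjustment, and with your (correct) observation that $\alpha>2$ forces $\cos(\arg z/\alpha)>0$ on the whole sector so that the $r=0$ term always dominates the algebraic series $H$, the argument matches the source of the theorem; the only thing you leave unproved, namely the uniform validity of (\ref{eq:InverseFactorialExpansion}) and the control of the contour tails, is exactly what the paper also delegates to \cite{Paris2010,ParisKaminski2001}.
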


The evaluation of the coefficients $c_k$ in $F(z)$ is not an easy task and, indeed, each coefficient is function (of increasing complexity as $k$ increases) of the three parameters $\alpha$, $\beta$ and $\gamma$. 

A sophisticated algorithm for their computation is proposed in \cite{Paris2010} and in the Appendix at the end of this paper we show the main step for the application to the Prabhakar function; thanks to this algorithm we are able to numerically evaluate any number of $c_k$'s. The first few coefficients are however explicitly listed here

\[
	\begin{aligned}
	c_{0} &= 1 \\
	c_{1} &= \frac{(\gamma-1)}{2} \left( \alpha \gamma + \gamma - 2 \beta\right) \\
	c_{2} &= \frac{(\gamma-1)(\gamma-2)}{24} \left( 3(\alpha+1)^2\gamma^2 - (\alpha+1)(\alpha+12\beta+5) \gamma + 12\beta(1+\beta) \right) \\
	\end{aligned}
\]

\begin{remark}
When $\gamma=1$ it is $c_{0}=1$ and $c_{k}=0$, $k=1,2,\dots$; in this case it is indeed possible to verify that for the functions introduced in equation (\ref{eq:UR_Definition}) in the Appendix it is $R(s)=\Upsilon(s)\equiv1$ and hence in equation (\ref{eq:Coeff_ML2}) we have $R_0=\Upsilon_0=1$ and $R_k=\Upsilon_k=0$ for any $k=1,2,\dots$. Thus, the expansion given in Theorem \ref{thm:Expansion1} corresponds to the well-known expansion (e.g., see \cite[Theorem 4.3]{GorenfloKilbasMainardiRogosin2014}) for the standard two parameter ML function.
\end{remark}

\subsection{Expansion on the negative semi-axis}

In several applications it is of special interest to investigate the asymptotic behavior of the Prabhakar function along the real and negative semi-axis. For example, in a recent paper, Xu \cite{Xu} has studied particle deposition in porous media, comparing different models involving Riemann-Liouville, Hadamard and Prabhakar time-fractional derivatives. In the last case, the analysis of the asymptotic behavior of the Prabhakar function plays a relevant role. To this purpose, on the basis of the results from the previous subsection we can provide the following result.

\begin{theorem}\label{thm:ExpNegativeSemiAxis}
Let $\alpha >0$ and $t > 0$. Then 
\[	
	E_{\alpha,\beta}^{\gamma}(-t) \sim 
		\left\{\begin{array}{lc}
		H(t) & 0<\alpha<2 \\
		C_{0}(t) + H(t)  & \alpha = 2 \\ 
		\displaystyle\sum_{r=0}^{P-1} C_r(t)  \quad \quad & \alpha > 2\\ 
		\end{array} \right.
	, \quad \text{as } t\to \infty,
\]
with $P = {\textstyle \left\lfloor \frac{1}{2} \left( \frac{\alpha}{2}+1\right) \right\rfloor }$ and 
\[
	C_{r}(t) =  \frac{2}{\alpha^{\gamma}\Gamma(\gamma)} \exp\Bigl(t^{\frac{1}{\alpha}} \cos {\textstyle \frac{(2r+1)\pi}{\alpha}} \Bigr) \sum_{k=0}^{\infty} c_k t^{\frac{\gamma-\beta-k}{\alpha}} \cos \Bigl( {\textstyle\frac{(2r+1)\pi(\gamma-\beta-k)}{\alpha}}+ t^{\frac{1}{\alpha}} \sin {\textstyle \frac{(2r+1)\pi}{\alpha}} \Bigr) .
\]
\end{theorem}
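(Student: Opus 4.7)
The plan is to specialize Theorems~\ref{thm:Expansion1}--\ref{thm:Expansion3} to the ray $z=-t$ with $t>0$ and to simplify the resulting expressions by combining the $F$-terms in matching pairs. I will fix the principal branch so that $\arg(-t)=\pi$ and $(-t)^{\mu}=t^{\mu}\eu^{\iu\pi\mu}$ for $\mu\in\Cset$. For $0<\alpha<2$ the ray $z=-t$ satisfies $|\arg(-z)|=0<\tfrac12\pi(2-\alpha)$, so only the second alternative of Theorem~\ref{thm:Expansion1} intervenes; since $z\eu^{\mp\iu\pi}=-t\cdot(-1)=t$, the claimed expansion $E_{\alpha,\beta}^{\gamma}(-t)\sim H(t)$ is immediate. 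For $\alpha=2$ the same computation still produces the $H(t)$ contribution via Theorem~\ref{thm:Expansion2}, while the two $F$-contributions $F(z)+F(z\eu^{\mp 2\pi\iu})$ are obtained by evaluating $F$ with $\arg z=\pi$ and $\arg z=-\pi$ respectively; substituting $(-t)^{1/\alpha}=t^{1/\alpha}\eu^{\pm\iu\pi/\alpha}$ and $(-t)^{(\gamma-\beta-k)/\alpha}=t^{(\gamma-\beta-k)/\alpha}\eu^{\pm\iu\pi(\gamma-\beta-k)/\alpha}$ into the definition of $F$ and combining the two complex exponentials into cosines produces exactly $C_0(t)$.

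For $\alpha>2$ I would apply Theorem~\ref{thm:Expansion3}, denote by $\widetilde P$ its index $P$ (smallest with $2\widetilde P+1>\alpha/2$) and pair each $r\in\{0,1,\dots,\widetilde P-1\}$ with its counterpart $-r-1\in\{-\widetilde P,\dots,-1\}$. Using $(z\eu^{2\pi\iu r})^{1/\alpha}=t^{1/\alpha}\eu^{\iu(2r+1)\pi/\alpha}$ together with the analogous formulas for $(z\eu^{2\pi\iu r})^{(\gamma-\beta-k)/\alpha}$, a direct algebraic manipulation yields
\[
F(z\eu^{2\pi\iu r})+F(z\eu^{-2\pi\iu(r+1)})=C_r(t),
\]
after combining the pair of exponentials with opposite phases into cosines. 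The unpaired index $r=\widetilde P$ contributes the term $F(z\eu^{2\pi\iu\widetilde P})$, whose exponential factor $\exp\bigl(t^{1/\alpha}\cos\tfrac{(2\widetilde P+1)\pi}{\alpha}\bigr)$ has negative cosine by the defining $2\widetilde P+1>\alpha/2$, so it is exponentially subdominant; analogously any $C_r(t)$ with $\cos\tfrac{(2r+1)\pi}{\alpha}<0$ is negligible in the Poincar\'e sense and is absorbed into the asymptotic error.

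The main obstacle I anticipate is the combinatorial bookkeeping that matches the upper bound $P$ in the statement. An index $r\ge 0$ produces a non-decaying (growing or oscillatory) $C_r$ exactly when $(2r+1)\pi/\alpha\le\pi/2$, i.e.\ $r\le(\alpha-2)/4$; counting such integers and using $\lfloor x+1\rfloor=\lfloor x\rfloor+1$ gives $\lfloor(\alpha-2)/4\rfloor+1=\lfloor(\alpha+2)/4\rfloor=\lfloor\tfrac12(\tfrac{\alpha}{2}+1)\rfloor=P$ retained terms, as claimed. A small delicacy is the boundary case $\alpha=4r+2$, where $\cos\tfrac{(2r+1)\pi}{\alpha}=0$ and the corresponding $C_r$ is oscillatory but not decaying, so it must be retained; the non-strict inequality $r\le(\alpha-2)/4$ correctly includes it.
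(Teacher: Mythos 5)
Your argument is correct and follows essentially the same route as the paper: specialize Theorems~\ref{thm:Expansion1}--\ref{thm:Expansion3} to $z=t\eu^{\pm\iu\pi}$, pair the conjugate $F$-terms (index $r$ with $-r-1$) into the cosine combinations $C_r(t)$, and drop the single unpaired term $F(t\eu^{\iu\pi(2P+1)})$, which is exponentially small because $2P+1>\alpha/2$ forces $\cos\frac{(2P+1)\pi}{\alpha}<0$. Your final bookkeeping step is actually vacuous in a harmless way: since $P-1=\lfloor(\alpha-2)/4\rfloor$, every paired index $r\le P-1$ automatically satisfies $\cos\frac{(2r+1)\pi}{\alpha}\ge 0$, so no additional $C_r$ ever needs to be discarded.
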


\begin{pf}
By applying Theorems \ref{thm:Expansion1}, \ref{thm:Expansion2} and \ref{thm:Expansion3}, and after using the equivalence $-t=\eu^{\pi \iu}t$, as $t \to \infty$ we can write 
\[
	E_{\alpha,\beta}^{\gamma}(-t) \sim \left\{ \begin{array}{ll}
		H(t) & 0<\alpha<2 \\
		F(\eu^{\pi \iu} t) + F(\eu^{-\pi\iu} t) + H(t)  & \alpha = 2 \\
		\displaystyle\sum_{r=-P}^{P} F(\eu^{\pi\iu (2r+1)} t) , \quad {\textstyle P = \left\lfloor \frac{1}{2} \left( \frac{\alpha}{2}+1\right) \right\rfloor }		\quad	& 2 < \alpha ,  \\
	\end{array} \right.
\]
with $\left\lfloor x \right\rfloor$ the largest integer smaller than $x$. It is simple to observe that
\[
	\exp \left( \bigl(\eu^{ \iu r \pi} t\bigr)^{\frac{1}{\alpha}}\right) 
	= \exp\Bigl( t^{\frac{1}{\alpha}} \cos {\textstyle \frac{r\pi}{\alpha}} \Bigr) 
		\left[ \cos \Bigl( t^{\frac{1}{\alpha}} \sin {\textstyle \frac{r\pi}{\alpha}} \Bigr) + \iu \sin \Bigl( t^{\frac{1}{\alpha}} \sin{\textstyle \frac{r\pi}{\alpha}} \Bigr) \right]
\]
and
\[
	\bigl( \eu^{ \iu r \pi } t\bigr)^{\frac{\gamma-\beta-k}{\alpha}}
	 = t^{\frac{\gamma-\beta-k}{\alpha}} \left[ \cos {\textstyle \frac{r\pi (\gamma-\beta-k)}{\alpha}} + \iu \sin {\textstyle \frac{r\pi(\gamma-\beta-k)}{\alpha}} \right]
\]
and hence a standard manipulation leads to 
\[
	\begin{aligned}
		G_r(t) 
		:= &F(\eu^{\pi r \iu} t) + F(\eu^{- \pi r \iu} t) = \\
		= &\frac{2}{\alpha^{\gamma}\Gamma(\gamma)} \exp\Bigl(t^{\frac{1}{\alpha}} \cos {\textstyle \frac{r\pi}{\alpha}} \Bigr) \sum_{k=0}^{\infty} c_k t^{\frac{\gamma-\beta-k}{\alpha}} \cos \Bigl( {\textstyle\frac{r\pi(\gamma-\beta-k)}{\alpha}}+ t^{\frac{1}{\alpha}} \sin {\textstyle \frac{r\pi}{\alpha}} \Bigr) \\
	\end{aligned}
\]
and, for $r=0,1,\dots,P-1$, we can put $C_{r}(t)=G_{2r+1}(t)$.

Moreover, let us consider
\[
\begin{aligned}
	F(\eu^{\pi (2P+1) \iu} t) =&  \frac{1}{\alpha^{\gamma}\Gamma(\gamma)} \exp\Bigl(t^{\frac{1}{\alpha}} \cos {\textstyle \frac{(2P+1)\pi}{\alpha}} \Bigr) \sum_{k=0}^{\infty} c_k t^{\frac{\gamma-\beta-k}{\alpha}} \cdot \\
	& \cdot \left[ \cos \Bigl( {\textstyle\frac{(2P+1)\pi(\gamma-\beta-k)}{\alpha}}+ t^{\frac{1}{\alpha}} \sin {\textstyle \frac{(2P+1)\pi}{\alpha}} \Bigr) + \iu \sin \Bigl( {\textstyle\frac{(2P+1)\pi(\gamma-\beta-k)}{\alpha}}+ t^{\frac{1}{\alpha}} \sin {\textstyle \frac{(2P+1)\pi}{\alpha}} \Bigr) \right] \\
	\end{aligned}	
\]
and clearly $F(\eu^{\pi (2P+1) \iu} t) = \frac{1}{2}\bigl(C_{P}(t) + \iu S_{P}(t)\bigr)$ with
\[
	S_{r}(t) = \frac{2}{\alpha^{\gamma}\Gamma(\gamma)} \exp\Bigl(t^{\frac{1}{\alpha}} \cos {\textstyle \frac{(2r+1)\pi}{\alpha}} \Bigr) \sum_{k=0}^{\infty} c_k t^{\frac{\gamma-\beta-k}{\alpha}} \sin \Bigl( {\textstyle\frac{(2r+1)\pi(\gamma-\beta-k)}{\alpha}}+ t^{\frac{1}{\alpha}} \sin {\textstyle \frac{(2r+1)\pi}{\alpha}} \Bigr) .
\] 
 
When $\alpha > 2$ the value $P$ is such that $\alpha < 2 (2P+1)$ and hence $\cos {\textstyle \frac{(2P+1)\pi}{\alpha}} < 0$; thus $C_{P}(t)\to0$ and $S_{P}(t)\to 0$ as $t\to \infty$ and therefore they can be neglected in the expansion of $E_{\alpha,\beta}^{\gamma}(-t)$ as compared to the other values of $C_r(t)$ in which exponential functions with positive arguments appear. \qed
\end{pf}

For the sake of clearness we plot in Figure \ref{fig:Fig_Alpha_P} the value $P$ of the number of terms $C_r(t)$ in dependence of $\alpha$.

\begin{figure}[ht]
\centering
\includegraphics[width=0.50\textwidth]{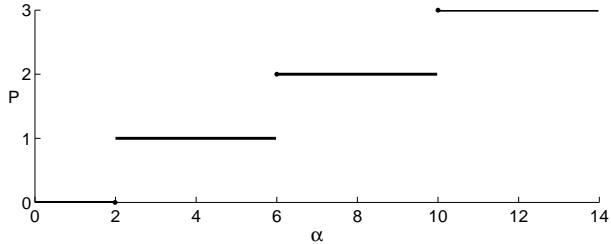}
\caption{Number of terms $C_r(t)$ in the expansion of $E_{\alpha,\beta}^{\gamma}(-t)$ in dependence of $\alpha$.}
\label{fig:Fig_Alpha_P}
\end{figure}	

Observe that when $\alpha=2$ the exponential in $C_0(t)$ gives a constant contribution since its argument is zero; thus, only terms decaying in an algebraic way appear for $0<\alpha\le 2$. For $\alpha >2$ we observe that $\cos {\textstyle \frac{(2r+1)\pi}{\alpha}}$ decreases as $r$ increases and hence just the first few functions $G_r(t)$ are sufficient to describe in an accurate way the behavior of $E_{\alpha,\beta}^{\gamma}(-t)$.

\section{Nonlinear heat conduction equations with memory involving Prabhakar derivatives.}\label{S:Heat}
        
Prabhakar-type fractional derivatives have gained interest in the recent literature not only in connection to anomalous dielectric relaxation models. For instance, Bulavatsky in \cite{Bulavatsky2017} considered mathematical models of filtration dynamics based on fractional equations involving the Hilfer-Prabhakar derivatives introduced in \cite{GarraGorenfloPolitoTomovski2014,PolitoTomovski2016}. In all these cases the authors considered essentially linear models of diffusion or relaxation with memory, treated by means of fractional differential equations involving Prabhakar-type derivatives. 

In this paper we extend the analysis by considering for the first time the application of Prabhakar derivatives in the theory of nonlinear heat conduction equation with memory.

It is well-known that the heat equation arises from the energy balance equation (in the one dimensional case)
\begin{equation}\label{eq:HeatEq}
\rho c \frac{\partial T}{\partial t}= -\frac{\partial q}{\partial x}
\end{equation}
together with the Fourier law
\begin{equation}\label{eq:FourierLaw}
q(x,t)= -k_T\frac{\partial T}{\partial x},
\end{equation}
where $\rho$ and $c$ are the medium mass density and the medium specific heat respectively, while $k_T$ is the thermal conductivity.

Starting from the general theory of heat conduction with finite wave speeds of Gurtin and Pipkin \cite{GurtinPipkin1968}, many applications of Caputo time-fractional derivatives to problems of heat propagation with memory effects have been studied in the literature (see for example \cite{Povstenko2015} for a review and \cite{FabrizioGiorgiMorro2017} for a useful physical discussion).
On the other hand, the temperature dependence of the thermal conductivity is in many cases not negligible, leading to a nonlinear formulation of the classical heat equation (see for example \cite{Straughan2011}).

In the context of the theory of heat propagation with memory, we here consider a new formulation of the energy balance (\ref{eq:HeatEq}), by replacing the first time derivative with the Caputo-type Prabhakar derivative (\ref{CP}), with $\alpha \gamma \in (0,1)$, and considering the following power law temperature dependence of the thermal coefficient in (\ref{eq:FourierLaw}) \cite{Straughan2011}
\begin{equation}
k_T\sim k_0 T^\xi, \quad \xi >0.
\end{equation}

Therefore we obtain the following nonlinear fractional heat conduction equation
\begin{equation}\label{HE}
{}^{\text{\tiny{C}}}{}{\bigl({}_{0}D^{\alpha}_{t} + \lambda \bigr)^{\gamma}} T(x,t)= k_0 \frac{\partial}{\partial x} T^\xi \frac{\partial T}{\partial x}-\beta T, \quad \beta, \xi >0
\end{equation}
where we have also included a linear term in order to represent the linear heat loss.

In few cases it is possible to find exact analytical solutions to nonlinear fractional partial differential equations such as \eqref{HE}. In the recent literature, useful tools in this direction are provided by the Lie group analysis and the invariant subspace method (see for example \cite{SahadevanPrakash2016} and the references therein).

The aim of this section is therefore dual: to consider a new nonlinear model of heat propagation with memory involving Prabhakar functions and to provide some exact results by using a generalized separating variable method (see for example the Encyclopedic book of Polyanin and Zaitsev \cite{PolyaninZaitsev2004}).

\begin{proposition}
Let $C$ an arbitrary positive constant depending on the initial condition and $\alpha \gamma \in (0,1)$, then the nonlinear fractional heat conduction equation \eqref{HE} admits as a particular solution the following one
\begin{equation}\label{sol}
T(x,t)= (x+C)^{\frac{1}{1+\xi}}\sum_{k=0}^\infty (-\beta)^k t^{\alpha \gamma k}E^{\gamma k}_{\alpha, 1+\alpha\gamma k}(-\lambda t^{\alpha}).
\end{equation}
\end{proposition}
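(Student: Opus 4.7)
The plan is to verify the claimed solution by exploiting the multiplicative structure $T(x,t)=X(x)\tau(t)$ with $X(x)=(x+C)^{1/(1+\xi)}$, which is the form suggested by the invariant‑subspace/generalised separation method alluded to in the text. The crucial algebraic observation is that with this choice of $X$ the nonlinear diffusion term collapses to zero. Indeed, one has $T^{\xi+1}(x,t)=(x+C)\,\tau(t)^{\xi+1}$, which is linear in $x$, so
\[
\frac{\partial}{\partial x}\!\left(T^{\xi}\frac{\partial T}{\partial x}\right)=\frac{1}{\xi+1}\frac{\partial^{2}}{\partial x^{2}}T^{\xi+1}=0 .
\]
First I would record this computation and thereby reduce the PDE \eqref{HE} to the pure temporal equation
\[
{}^{\text{\tiny{C}}}{}{\bigl({}_{0}D^{\alpha}_{t} + \lambda \bigr)^{\gamma}}\tau(t)=-\beta\,\tau(t), \qquad \tau(0)=1,
\]
where the initial value follows by inspection of the candidate series (only the $k=0$ term contributes at $t=0$).

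Next I would solve this scalar equation by the Laplace transform. Since $\alpha\gamma\in(0,1)$ we take $m=1$ in \eqref{CP}; the Laplace image of the kernel $e^{-\gamma}_{\alpha,1-\alpha\gamma}(t;-\lambda)$, obtained from \eqref{eq:ML3_LT}, is $s^{-1}(s^{\alpha}+\lambda)^{\gamma}$, so the convolution rule gives
\[
\mathcal{L}\!\left\{{}^{\text{\tiny{C}}}{}{\bigl({}_{0}D^{\alpha}_{t} + \lambda \bigr)^{\gamma}}\tau\right\}(s)=(s^{\alpha}+\lambda)^{\gamma}\hat\tau(s)-s^{-1}(s^{\alpha}+\lambda)^{\gamma}\tau(0).
\]
Substituting into the transformed ODE and solving algebraically yields
\[
\hat\tau(s)=\frac{(s^{\alpha}+\lambda)^{\gamma}}{s\bigl((s^{\alpha}+\lambda)^{\gamma}+\beta\bigr)}=\frac{1}{s}\sum_{k=0}^{\infty}\frac{(-\beta)^{k}}{(s^{\alpha}+\lambda)^{\gamma k}},
\]
the geometric expansion being valid for $|s|$ large enough to ensure $\beta/(s^{\alpha}+\lambda)^{\gamma}<1$.

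Inverting each term via \eqref{eq:ML3_LT} with parameters $(\alpha,\,1+\alpha\gamma k,\,\gamma k)$ and $z=-\lambda$ gives exactly
\[
\mathcal{L}^{-1}\!\left\{\frac{1}{s(s^{\alpha}+\lambda)^{\gamma k}}\right\}(t)=t^{\alpha\gamma k}E^{\gamma k}_{\alpha,\,1+\alpha\gamma k}(-\lambda t^{\alpha}),
\]
so termwise inversion reproduces the series in \eqref{sol}. Finally, multiplying by $X(x)$ recovers the proposed $T(x,t)$.

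The main technical obstacle is the justification of the termwise Laplace inversion and of the differentiation under the summation sign when checking that the constructed $\tau(t)$ genuinely satisfies the Caputo–Prabhakar equation pointwise. This is handled by an elementary majorant argument: for fixed $t$ in a bounded interval, the growth of $E^{\gamma k}_{\alpha,1+\alpha\gamma k}(-\lambda t^{\alpha})$ in $k$ is controlled (in absolute value) by $1/\Gamma(1+\alpha\gamma k)$ times a polynomial factor, so the series together with all its formal derivatives converges uniformly on compacts and the interchange is legitimate. Once this is dispatched, the identity ${}^{\text{\tiny{C}}}{}{\bigl({}_{0}D^{\alpha}_{t} + \lambda \bigr)^{\gamma}}\tau=-\beta\tau$ holds by the uniqueness of the Laplace transform, and the proposition follows.
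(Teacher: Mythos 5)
Your proposal is correct and follows essentially the same route as the paper: the separation ansatz $T=(x+C)^{1/(1+\xi)}\tau(t)$ annihilates the nonlinear diffusion term, the problem reduces to the scalar eigenvalue equation for the regularized Prabhakar derivative, and the Laplace transform with a geometric expansion of $\bigl(1+\beta(s^{\alpha}+\lambda)^{-\gamma}\bigr)^{-1}$ is inverted term by term via (\ref{eq:ML3_LT}). The only (immaterial) differences are that you invert each term directly with parameters $(\alpha,1+\alpha\gamma k,\gamma k)$ rather than through the negative-upper-index transform pair used in the paper, and that you sketch the majorant argument for termwise inversion, which the paper delegates to a reference.
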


\begin{proof}
We start our proof from the following \textit{ansatz} on the separating variable solution for \eqref{HE}
\begin{equation}\label{ans}
T(x,t)= f(t)(x+C)^{\frac{1}{1+\xi}}.
\end{equation}

Indeed, it is simple to prove that the solution \eqref{ans} is such that
\begin{equation}
k_0 \frac{\partial}{\partial x} T^\xi \frac{\partial T}{\partial x}= 0
\end{equation}
and therefore, by substituting \eqref{ans} in \eqref{HE}, the problem is reduced to the search of the solution of the following linear fractional differential equation
\begin{equation}\label{e2}
{}^{\text{\tiny{C}}}{}{\bigl({}_{0}D^{\alpha}_{t} + \lambda \bigr)^{\gamma}} f(t)=-\beta f(t).
\end{equation}

Assuming that $f(0)= 1$ (and therefore the corresponding initial condition of our problem is $T(x,0)= (x+C)^{\frac{1}{1+\xi}}$), we apply the LT method to solve \eqref{e2}. Recalling that (see e.g. \cite{GarrappaMainardiMaione2016})
\begin{equation}
\mathcal{L}({}^{\text{\tiny{C}}}{}{\bigl({}_{0}D^{\alpha}_{t} + \lambda \bigr)^{\gamma}} f(t);s) = (s^\alpha+
\lambda)^\gamma \tilde{f}(s)-s^{-1}(s^\alpha+
\lambda)^\gamma f(0^+),
\end{equation}
we have that 
\[
\tilde{f}(s) = s^{-1}\left(1+\frac{\beta}{(s^\alpha+\lambda)^\gamma}\right)^{-1} = \sum_{k=0}^{\infty}(-\beta)^k s^{-1-\alpha \gamma k}
(1+\frac{\lambda}{s^\alpha})^{-\gamma k}.
\]

Recalling that (see for example \cite{KilbasSaigoSaxena2004} and \cite{GorenfloKilbasMainardiRogosin2014})
\begin{equation}
\mathcal{L}(t^{\mu-1}E^{-\gamma}_{\rho,\mu}(\omega t^\rho);s) = s^{-\mu}(1-\omega s^{-\rho})^{\gamma},
\end{equation}
we finally obtain 
\begin{equation}\label{eq:SolutionTime}
f(t)= \sum_{k=0}^\infty (-\beta)^k t^{\alpha \gamma k}E^{\gamma k}_{\alpha, 1+\alpha\gamma k}(-\lambda t^{\alpha}),
\end{equation}
as claimed.
\end{proof}

We refer to \cite{GarraGorenfloPolitoTomovski2014} for the discussion of the convergence of the series appearing in \eqref{sol}.

On the same line of the previous problem, we can consider also another interesting case, in which the thermal coefficient has an exponential dependence by temperature, i.e. 
\begin{equation}\label{HE1a}
k_T\sim k_0 e^{\nu T/f(t)}, \quad \nu>0 ,
\end{equation}
with $f(t)$ the function defined in (\ref{eq:SolutionTime}), leading to the following nonlinear fractional heat conduction equation
\begin{equation}\label{HE1}
{}^{\text{\tiny{C}}}{}{\bigl({}_{0}D^{\alpha}_{t} + \lambda \bigr)^{\gamma}} T(x,t)= k_0 \frac{\partial}{\partial x} \bigg(e^{\nu T/f(t)} \frac{\partial T}{\partial x}\bigg)-\beta T. 
\end{equation}

In this case we have the following result (we leave the proof to the reader, since it is essentially based on the same ideas of the previous one).

\begin{proposition}
Let $C$ an arbitrary positive constant depending on the initial condition, $\beta, \nu>0$ and $x>C$, then the nonlinear fractional heat conduction equation \eqref{HE1} admits as a particular solution the following one
\begin{equation}\label{sol1}
T(x,t)= \frac{\ln(x+C)}{\nu}\sum_{k=0}^\infty (-\beta)^k t^{\alpha \gamma k}E^{\gamma k}_{\alpha, 1+\alpha\gamma k}(-\lambda t^{\alpha}) .
\end{equation}
\end{proposition}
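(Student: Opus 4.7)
The plan is to follow the same template as the preceding proposition: introduce a multiplicative separation ansatz $T(x,t) = g(x) f(t)$, where $f(t)$ is exactly the temporal factor from (\ref{eq:SolutionTime}), and choose $g(x)$ so that the exponential nonlinear diffusion term collapses identically to zero. The Prabhakar derivative ${}^{\text{\tiny{C}}}{}{\bigl({}_{0}D^{\alpha}_{t} + \lambda \bigr)^{\gamma}}$ acts only in $t$, so under such an ansatz $g(x)$ factors out of the left-hand side, and once the nonlinear term vanishes we are reduced to the same linear fractional ODE (\ref{e2}) that was already solved via Laplace transform in the previous proof.

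The crucial step is selecting $g(x)$. The exponential dependence $e^{\nu T/f(t)} = e^{\nu g(x)}$ in (\ref{HE1a}) is engineered so that the quotient $T/f(t)$ is purely spatial, and this is the reason for dividing by $f(t)$ in the constitutive law. To make $\partial_x\bigl(e^{\nu g(x)} g'(x)\bigr)$ vanish I would try to make $e^{\nu g(x)} g'(x)$ constant in $x$. The natural candidate is $g(x) = \nu^{-1}\ln(x+C)$, because then $e^{\nu g(x)} = x+C$ and $g'(x) = \nu^{-1}(x+C)^{-1}$, whence
\[
e^{\nu T/f(t)} \, \frac{\partial T}{\partial x} = (x+C) \cdot \frac{f(t)}{\nu(x+C)} = \frac{f(t)}{\nu},
\]
which is independent of $x$. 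Therefore $\frac{\partial}{\partial x}\bigl(e^{\nu T/f(t)} \partial_x T\bigr) \equiv 0$, killing the nonlinear contribution in (\ref{HE1}). The restrictions $\nu>0$ and $x>C$ (actually, $x+C>0$ suffices) are used here only to make $\ln(x+C)$ and the ansatz well defined and real.

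What remains is to verify that the reduced equation ${}^{\text{\tiny{C}}}{}{\bigl({}_{0}D^{\alpha}_{t} + \lambda \bigr)^{\gamma}} f(t) = -\beta f(t)$ with the initial datum $f(0)=1$ (equivalently $T(x,0) = \nu^{-1}\ln(x+C)$) is solved by the series (\ref{eq:SolutionTime}). But this is precisely equation (\ref{e2}) of the previous proposition, and its solution by Laplace inversion was already carried out. Substituting back into $T(x,t) = g(x)f(t)$ yields (\ref{sol1}). I expect no substantial obstacle beyond recognizing the correct form of $g(x)$; the convergence of the resulting series is the same issue as in the polynomial case and can be handled by the reference to \cite{GarraGorenfloPolitoTomovski2014} invoked immediately above the statement.
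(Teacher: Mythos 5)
Your proof is correct and follows exactly the route the paper intends: the paper itself omits the argument, stating only that it is ``essentially based on the same ideas'' as the previous proposition, namely the separation ansatz $T(x,t)=g(x)f(t)$ with $g$ chosen so that the nonlinear flux $e^{\nu T/f(t)}\partial_x T$ becomes independent of $x$, reducing the problem to the linear eigenvalue equation (\ref{e2}) already solved by Laplace transform. Your computation verifying that $g(x)=\nu^{-1}\ln(x+C)$ makes the flux equal to $f(t)/\nu$ is the key step and is carried out correctly.
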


The function $f(t)$ introduced by equation (\ref{eq:SolutionTime}) is actually the eigenfunction of the Prabhakar derivative ${}^{\text{\tiny{C}}}{}{\bigl({}_{0}D^{\alpha}_{t} + \lambda \bigr)^{\gamma}}$ and its analysis is of importance to fully understand the behavior of the solutions of differential equations with Prabhakar derivatives. 

 We decided to consider these particular solutions with the aim of suggesting the possible utility of the asymptotic estimates here studied in the context of nonlinear physics. 
We can infer a posteriori the physical meaning of these solutions, observing that they correspond to the evolution of a temperature field with an initially logarithmic or power law distribution.

 We finally observe that symmetry methods allow to find other classes of solutions, e.g. of similarity type (see for example \cite{SahadevanPrakash2016} and the references therein). However, as far as we know, there are not any studies about other exact solutions for the nonlinear fractional diffusion equations here considered.

 In the case of most interest, i.e. when $0<\alpha<2$, from Theorem \ref{thm:ExpNegativeSemiAxis} we know that 
\[
	E^{\gamma}_{\alpha, \beta}(-\lambda t^{\alpha}) \sim \sum_{j=0}^{\infty} \frac{(-1)^{j} \Gamma(j+\gamma)}{\Gamma(\gamma) j! \Gamma(\beta-\alpha(j+\gamma))} t^{-\alpha \gamma-j} \lambda^{-\gamma-j}
	, \quad t \to \infty ,
\]
and, hence, some simple manipulations allow to express the asymptotic behavior of $f(t)$ as
\[
	f(t) 
	%= \sum_{k=0}^\infty (-\beta)^k t^{\alpha \gamma k}E^{\gamma k}_{\alpha, 1+\alpha\gamma k}(-\lambda t^{\alpha})
	\sim \sum_{j=0}^{\infty} \frac{ \bigl( t^{\alpha}\lambda\bigr)^{-j}}{ \Gamma(1-\alpha j)} \varphi_j(\beta,\lambda,\gamma)
	, \quad
	t \to \infty ,
\]
where the coefficients $\varphi_j(\beta,\lambda,\gamma)$ are clearly given by
\[
	\varphi_j(\beta,\lambda,\gamma) = \frac{(-1)^j}{j!}\sum_{k=0}^\infty \frac{\Gamma(j+\gamma k)}{\Gamma(\gamma k)} \left( - \frac{\beta}{\lambda^{\gamma}} \right)^k .
\]

An approximation of the first coefficients $\varphi_j(\beta,\lambda,\gamma)$ can be numerically evaluated, thus allowing to draw the asymptotic behavior of $f(t)$ for large $t$; the corresponding plots are presented in the left plots of Figures \ref{fig:Fig_Calore_al}, \ref{fig:Fig_Calore_ga}, \ref{fig:Fig_Calore_la} and \ref{fig:Fig_Calore_be} for $\alpha$, $\gamma$, $\lambda$ and $\beta$ varying respectively. 

We must note that all the solutions converge to $\varphi_0(\beta,\lambda,\gamma)$.  To illustrate the same behavior in a logarithmic scale (see the corresponding plots in the right parts of each Figure) we have to first observe that
\[
	f(t) 	\sim \varphi_0(\beta,\lambda,\gamma) + t^{-\alpha} \frac{\varphi_1(\beta,\lambda,\gamma)}{\lambda \Gamma(1-\alpha)}
	, \quad
	t \to \infty ,
\]
where it is immediate to evaluate 
\[
	\varphi_0(\beta,\lambda,\gamma) = \frac{\lambda^{\gamma}}{\lambda^{\gamma} + \beta}
	, \quad
	\varphi_1(\beta,\lambda,\gamma) = \frac{ \beta \gamma \lambda^{\gamma}}{(\lambda^{\gamma} + \beta)^{2}} ;
\]
it is hence possible to highlight the power-low decay by drawing the function
\[
	\tilde{f}(t) = f(t) - \varphi_0(\beta,\lambda,\gamma) 
\]	
in the logarithmic scale, as shown in the right plots of Figures  \ref{fig:Fig_Calore_al}, \ref{fig:Fig_Calore_ga}, \ref{fig:Fig_Calore_la} and \ref{fig:Fig_Calore_be} (in each figure the parameters $\alpha$, $\gamma$, $\lambda$ and $\beta$ are chosen just with the aim of presenting clear and distinguishable plots). 

\begin{figure}[ht]
\centering
\begin{tabular}{cc}
\includegraphics[width=0.50\textwidth]{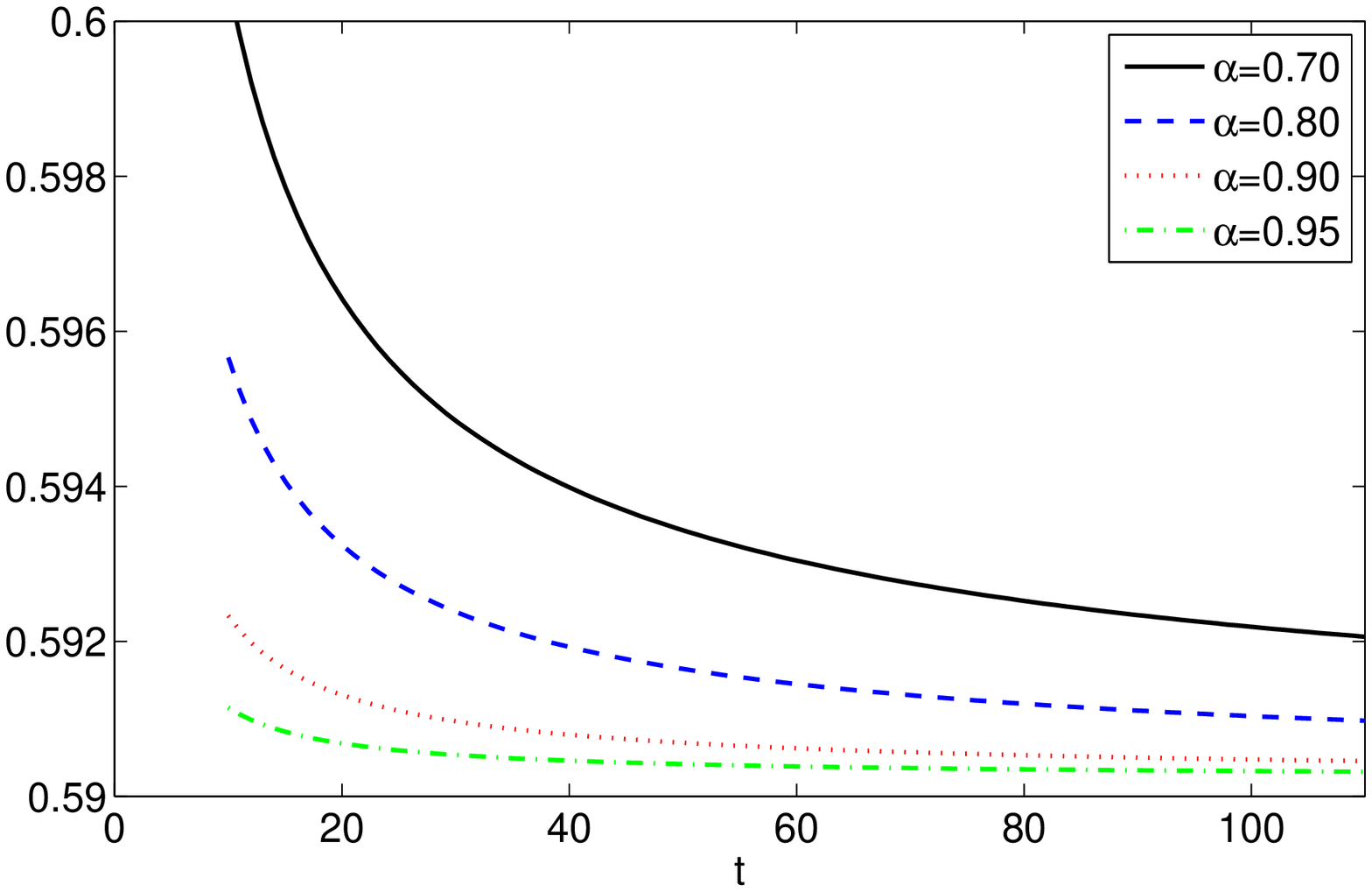} &
\includegraphics[width=0.50\textwidth]{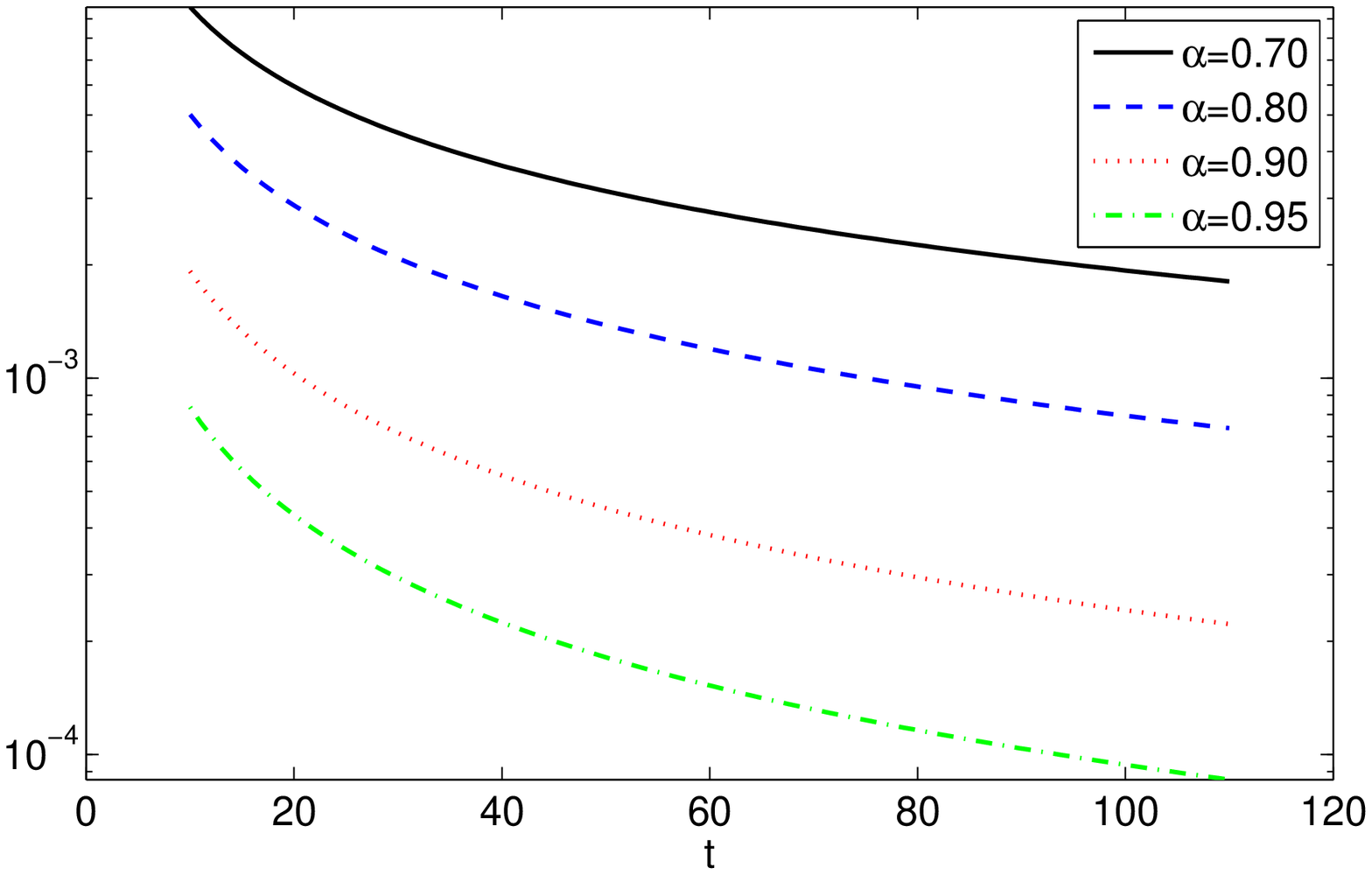}
\end{tabular}
\caption{Behavior of $f(t)$ (left plot) and of $\tilde{f}(t)$ (right plot) as $\alpha$ varies (here $\gamma=0.9$, $\lambda=1.5$ and $\beta=1.0$).}
\label{fig:Fig_Calore_al}
\end{figure}

\begin{figure}[ht]
\centering
\begin{tabular}{cc}
\includegraphics[width=0.50\textwidth]{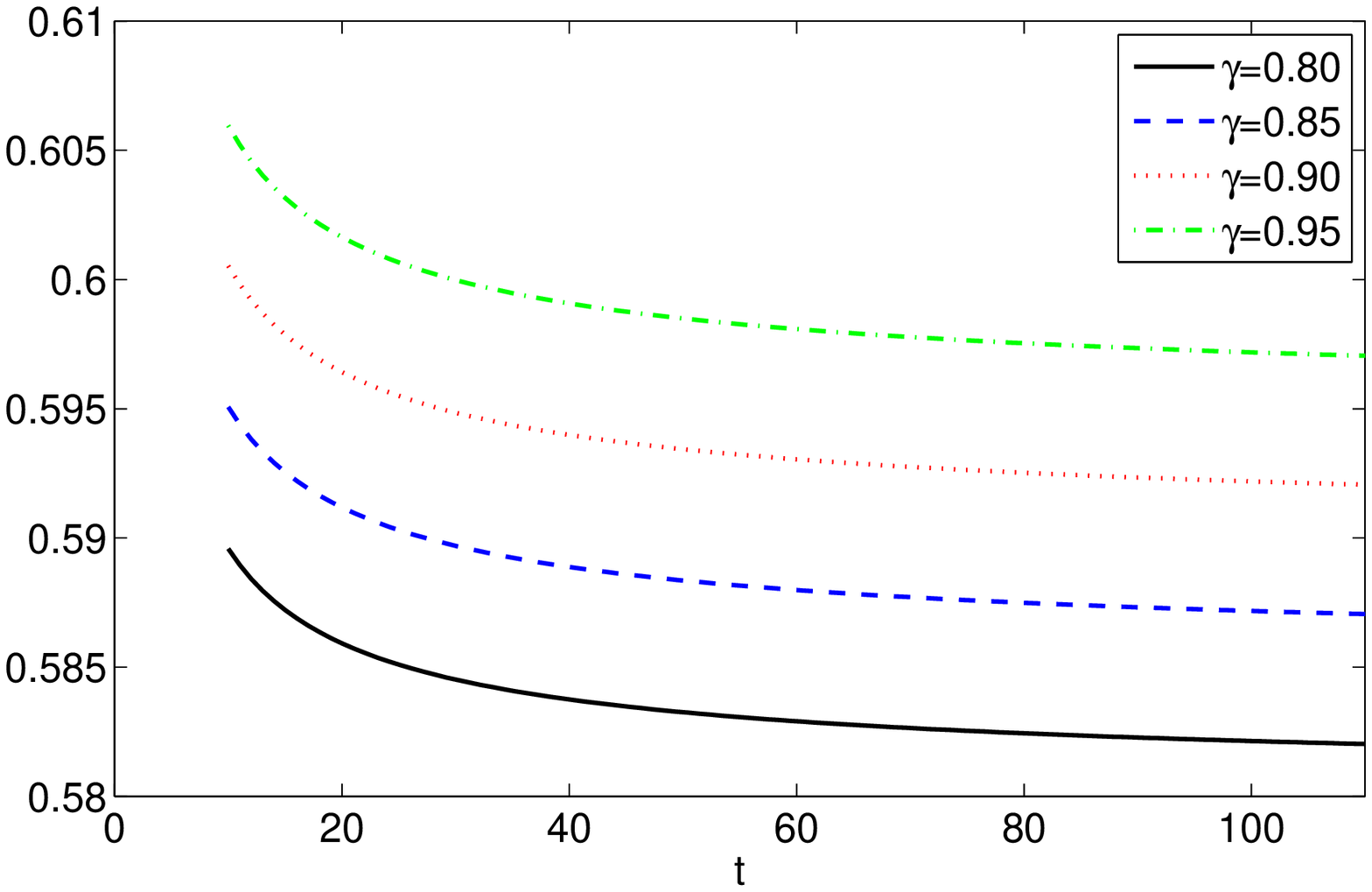} &
\includegraphics[width=0.50\textwidth]{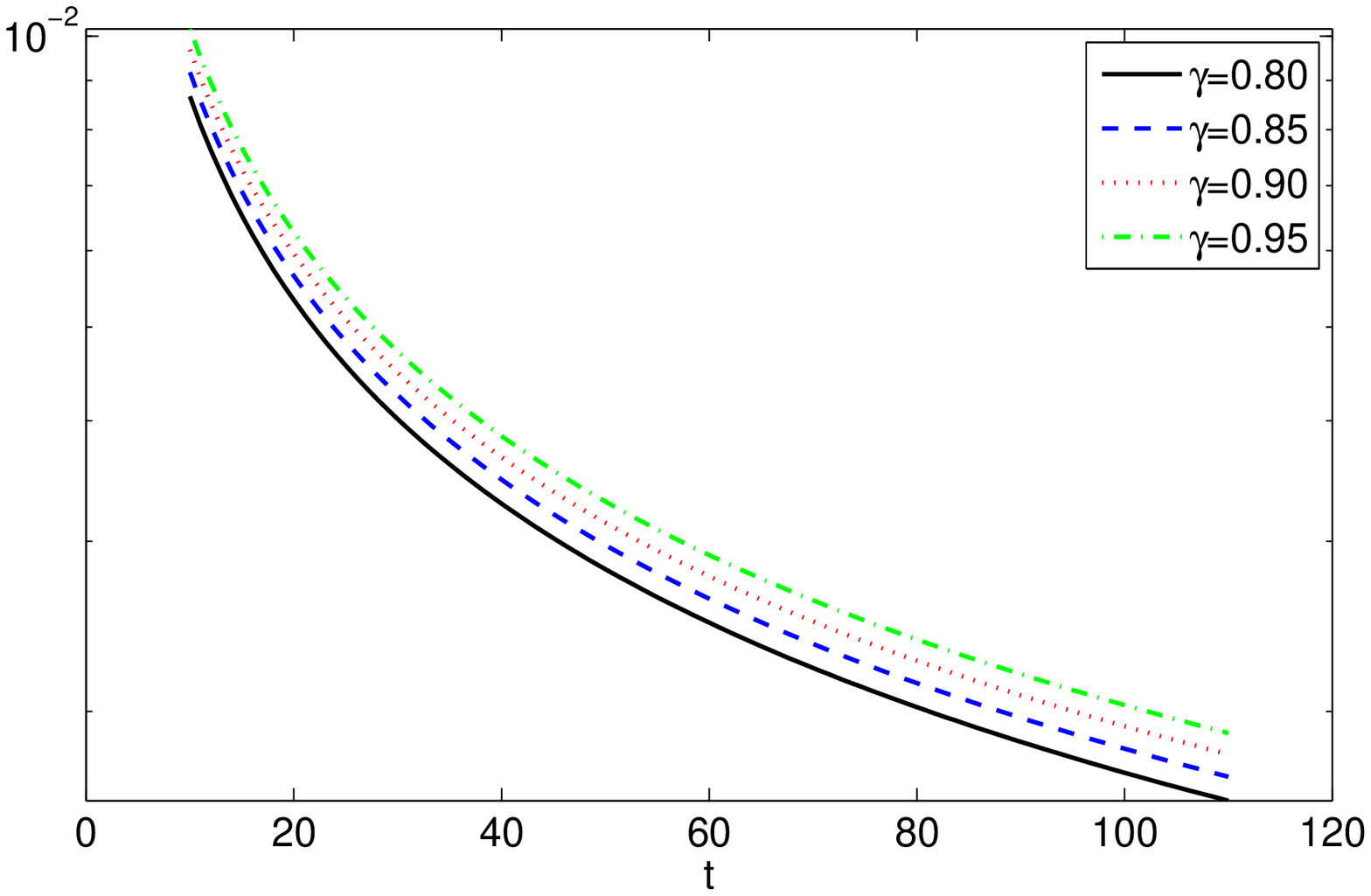}
\end{tabular}
\caption{Behavior of $f(t)$ (left plot) and of $\tilde{f}(t)$ (right plot) as $\gamma$ varies (here $\alpha=0.7$, $\lambda=1.5$ and $\beta=1.0$).}
\label{fig:Fig_Calore_ga}
\end{figure}

\begin{figure}[ht]
\centering
\begin{tabular}{cc}
\includegraphics[width=0.50\textwidth]{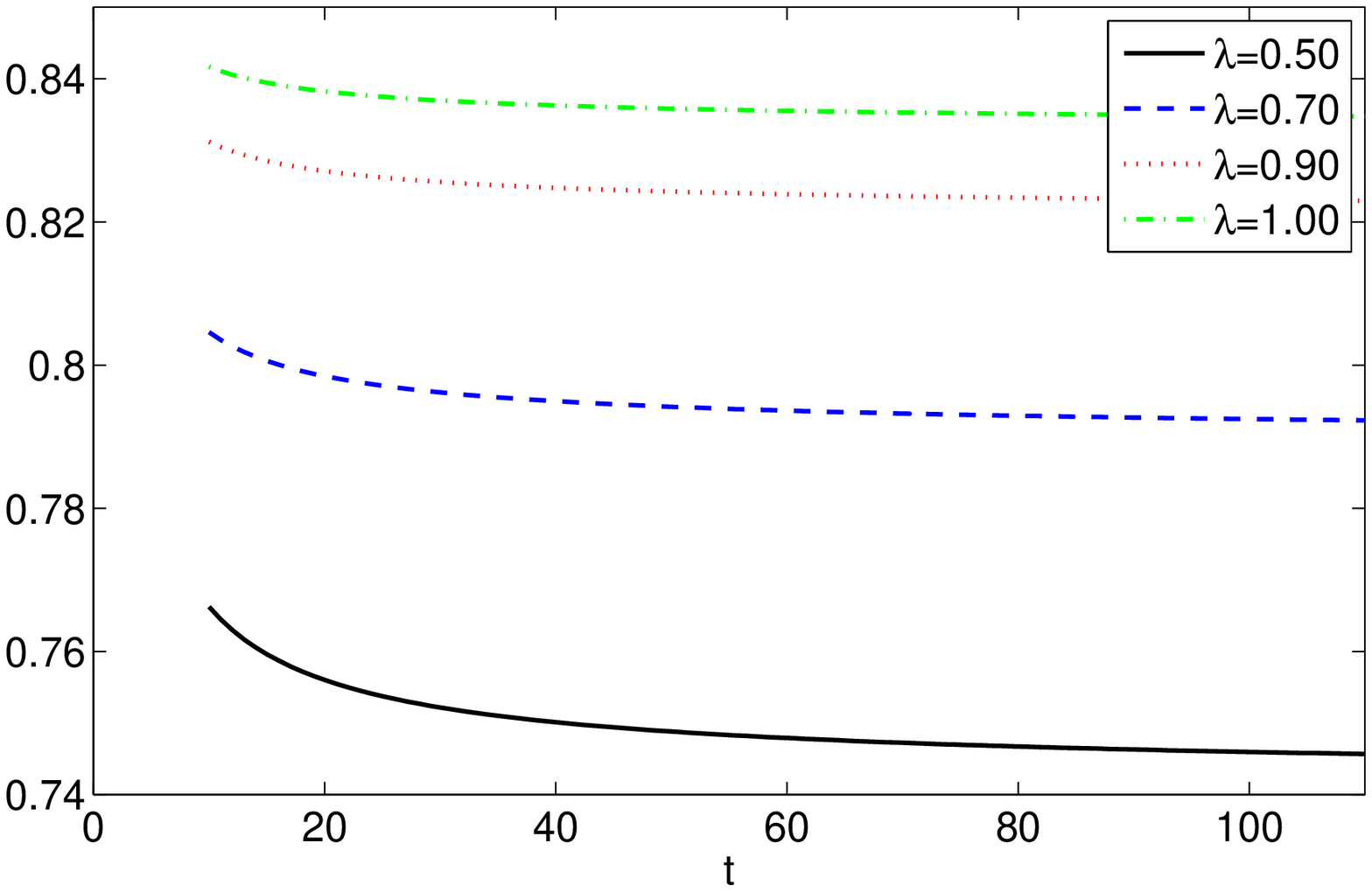} &
\includegraphics[width=0.50\textwidth]{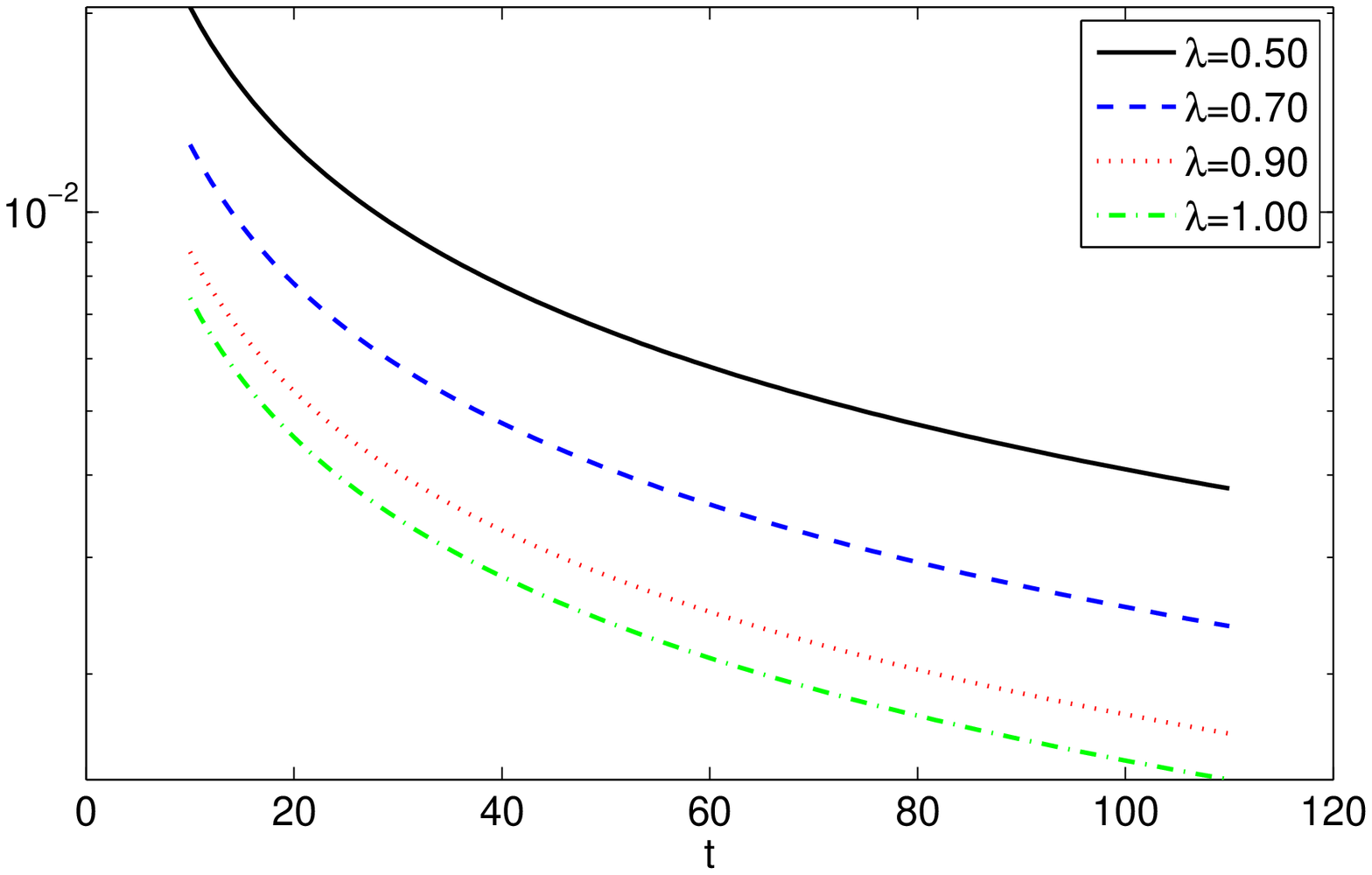}
\end{tabular}
\caption{Behavior of $f(t)$ (left plot) and of $\tilde{f}(t)$ (right plot) as $\lambda$ varies (here $\alpha=0.7$, $\gamma=0.8$ and $\beta=0.2$).}
\label{fig:Fig_Calore_la}
\end{figure}

\begin{figure}[ht]
\centering
\begin{tabular}{cc}
\includegraphics[width=0.50\textwidth]{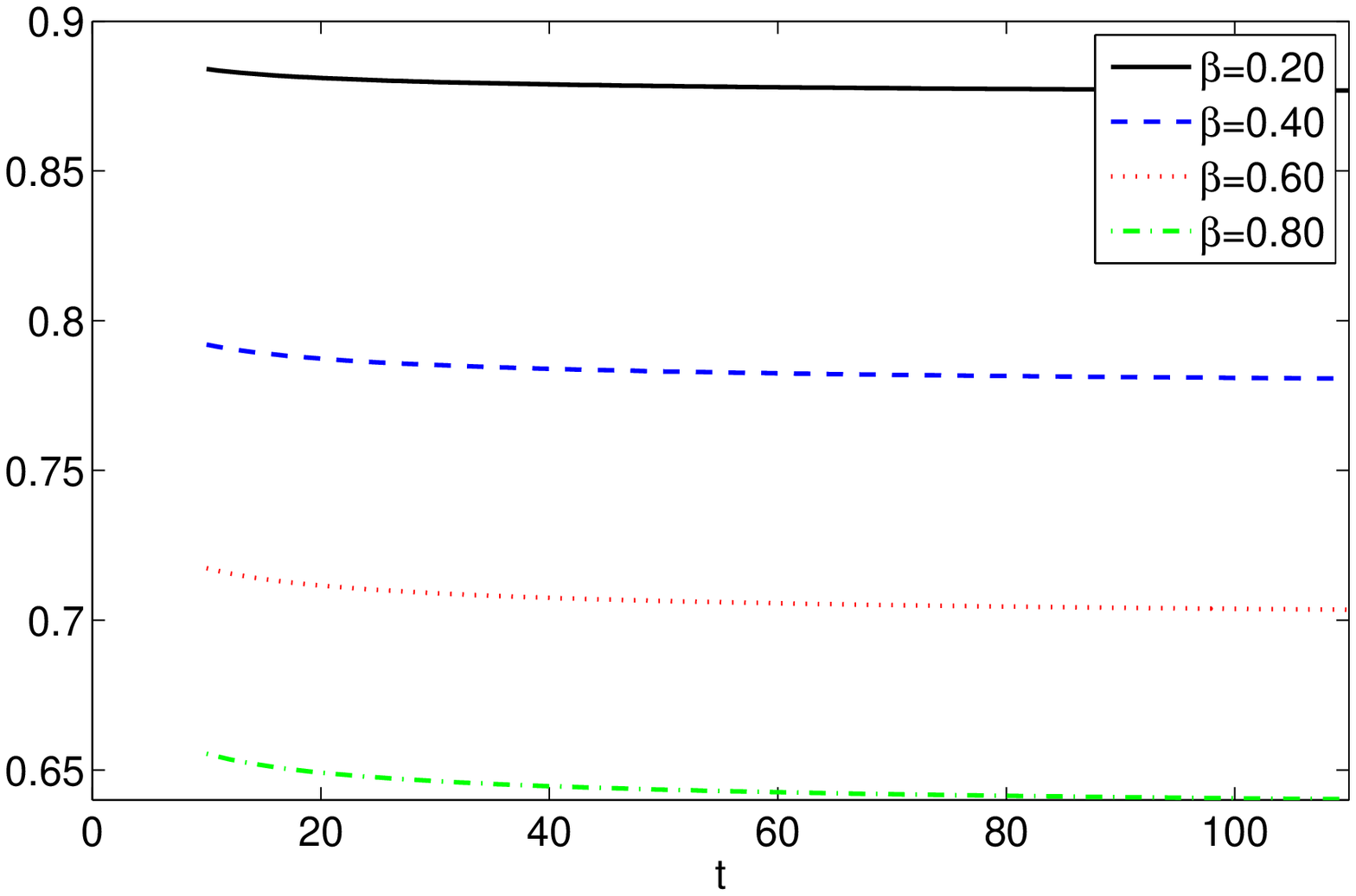} &
\includegraphics[width=0.50\textwidth]{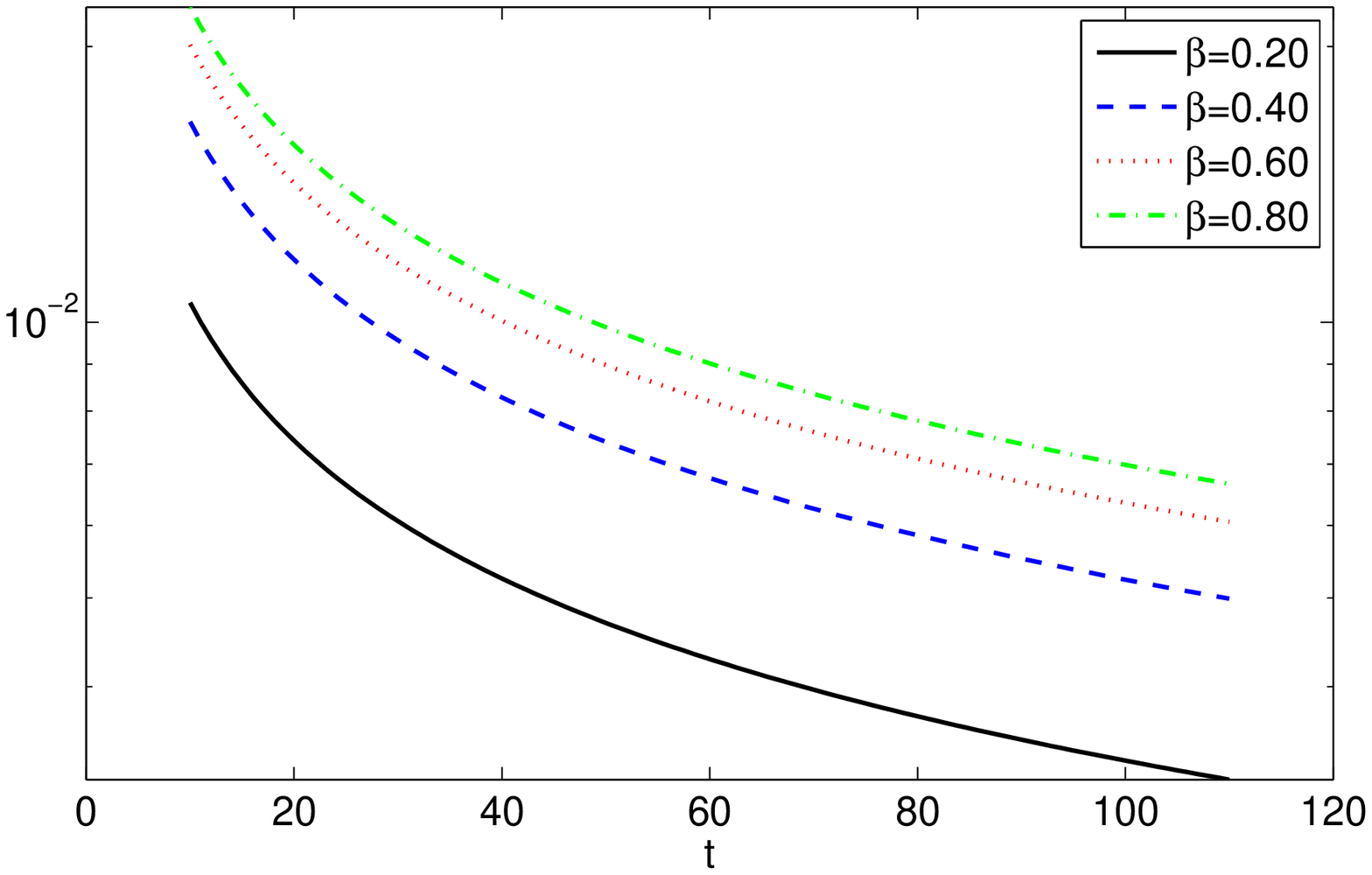}
\end{tabular}
\caption{Behavior of $f(t)$ (left plot) and of $\tilde{f}(t)$ (right plot) as $\beta$ varies (here $\alpha=0.5$, $\gamma=0.8$ and $\lambda=1.5$).}
\label{fig:Fig_Calore_be}
\end{figure}

\section{Concluding remarks}\label{S:Concluding}

In this paper we have studied the Prabhakar function, with a particular attention to the asymptotic behavior in the whole complex plane and on the negative semi-axis. We have also described fractional integral and fractional derivatives of Prabhakar type and studied the solution of some heat conduction equations with Prabhakar derivatives.

Fractional derivatives of Prabhakar type are stimulating the interest of researchers in several areas; the presence of two real powers allows indeed a better fitting of experimental data and the development of more reliable models. We think therefore that contributions devoted to study the behavior of the Prabhakar function, and of the solution of the equations based on Prabhakar operators, are surely useful for a fully understating of the dynamic of these models.

A further investigation should concern in the future on methods for the numerical evaluation the Prabhakar function, a task which has been performed only in a partial way in \cite{Garrappa2015_SIAM,StanislavskyWeron2012} and on numerical schemes for the solution of differential equations with operators of Prabhakar type.

\appendix
\section{Evaluation of coefficients in the asymptotic expansion}\label{S:AppendixEvaluation}

The main difficulty when using the asymptotic expansions presented in Theorems \ref{thm:Expansion1}, \ref{thm:Expansion2} and \ref{thm:Expansion3} is the evaluation of the coefficients $c_{j}$ in (\ref{eq:InverseFactorialExpansion}). Not only they depend on the three parameters $\alpha$, $\beta$ and $\gamma$ but in addition their evaluation is not a trivial task. It is indeed possible to provide an analytic representation in a reasonable form only for the first instances of $c_{j}$ whilst for the subsequent coefficients it is necessary to proceed by numerical evaluation.

We review, therefore, the procedure devised in \cite{Paris2010} with the aim of illustrating the main steps of an algorithm for the automatic generation of any number of coefficients $c_j$ in the asymptotic expansions of the Prabhakar function.

%We review here the procedure devised in \cite{Paris2010} to determine these coefficients with the aim of showing an explicit representation of the first few coefficients in \ref{eq:InverseFactorialExpansion}) and, at the same time, providing a straightforward procedure to automatically generate any number of coefficients.

The starting point is based on the observation that 
\begin{equation}\label{eq:MainStep}
	\frac{\Gamma(\gamma+s)\Gamma(\alpha s + \psi)}{\Gamma(s+1)\Gamma(\alpha s +\beta)} = \alpha^{1-\gamma} R(s) \Upsilon(s) ,
\end{equation}
where 
\begin{equation}\label{eq:UR_Definition}
	R(s) = \frac{e(s; \gamma) e(\alpha s; \psi)}{e(s;1)e(\alpha s; \beta)} ,
	\quad
	\Upsilon(s) = \frac
		{\Gamma^{\star}(\gamma+s)\Gamma^{\star}(\alpha s + \psi)}
		{\Gamma^{\star}(s+1)\Gamma^{\star}(\alpha s + \beta) }
	, 	
\end{equation}
with 
\[
	e(a; b) = \exp \left( \left(a + b - \frac{1}{2} \right) \log \left(1+\frac{b}{a} \right) - b \right) 
\]
and $\Gamma^{\star}(z)$ is the scaled gamma function defined by
\begin{equation}\label{eq:ScaledGamma}
	\Gamma^{\star}(z) = \left( 2 \pi \right)^{-\frac{1}{2}} \eu^z z^{\frac{1}{2}-z} \Gamma(z) .
\end{equation}

Hence, once asymptotic expansions for $R(s)$, $\Upsilon(s)$ and the reciprocal of the rising factorial $(\alpha s + \psi)_j$ are available, namely
\begin{equation}\label{eq:UR_Expansion}
	R(s) = \sum_{k=0}^{\infty} R_k s^{-k} 
	, \quad
	\Upsilon(s) = \sum_{k=0}^{\infty} \Upsilon_j s^{-k} 
	, \quad
	\frac{1}{(\alpha s + \psi)_j} = \sum_{k=0}^{\infty} D_{j,k} s^{-k} ,
\end{equation}
and since it is immediate to see that
\[
	\sum_{j=0}^{\infty}  \frac{c_j}{(\alpha s + \psi)_{j}} 
	= \sum_{j=0}^{\infty} c_{j} \sum_{k=0}^{\infty}  D_{j,k} s^{-k} 
	= \sum_{k=0}^{\infty} \left( \sum_{j=0}^{k} c_{j} D_{j,k} \right) s^{-k} ,
\]
where the last equality holds since, as we will see later, it is $D_{j,k} = 0$ for $j > k$, by matching terms by terms these expansions in the equivalence (\ref{eq:MainStep}) we obtain for any $k=0,1,\dots$
\[
	\sum_{j=0}^{k} c_{j} D_{j,k} = \sum_{j=0}^{k} R_{k-j} \Upsilon_{j}
\]
from which it is immediate to obtain the coefficients $c_j$ by means of the recursive relationship
\begin{equation}\label{eq:Coeff_ML2}
	c_{0} = \frac{R_{0} \Upsilon_{0}}{D_{0,0}} 
	, \quad
	c_{k} = \frac{1}{D_{k,k}} \left( \sum_{j=0}^{k} R_{k-j} \Upsilon_{j} - \sum_{j=0}^{k-1} c_{j} D_{j,k} \right) .
\end{equation}

The problem therefore becomes to find the three expansions in (\ref{eq:UR_Expansion}) which will be addressed in the next three subsections.

\subsection{Expansion for the reciprocal of the rising factorial} 

The expansion for the reciprocal of the rising factorial $(\alpha s + \psi)_j$ is easily obtained by the following result on the expansion of the ratio of two gamma functions \cite{TricomiErdelyi1951} 
\[
	\frac{1}{(\alpha s + \psi)_j} 
	= \frac{\Gamma(\alpha s + \psi)}{\Gamma(\alpha s + \psi+j)} 
	= \sum_{k=0}^{\infty} \frac{1}{\alpha^{k+j}} F_{j,k} s^{-k-j} ,
\]
where, the coefficients $F_{j,k}$ can be recursively evaluated as
\[
	F_{j,0} = 1 
	, \quad
	F_{j,k} = \frac{1}{k} \sum_{\ell=0}^{k-1} \left[ \binom{-j-\ell}{k-\ell+1} + (-1)^{k-\ell} j (\psi+j)^{k-\ell} \right] F_{j,\ell} .
\]

We note that since in the binomial coefficients the first argument is negative, a generalized definition must be used according to \cite{Kronenburg2011} \[
	\binom{n}{k} = \left\{\begin{array}{ll}
		\displaystyle\frac{n!}{k!(n-k)!} & \, 0 \le k \le n \\
		%0 & 0 \le n < k \\
		\displaystyle(-1)^k \binom{-n+k-1}{k}= (-1)^k\frac{(-n+k-1)!}{k!(-n-1)!} \, & n < 0 \le k \\
		\displaystyle (-1)^{n-k} \binom{-k-1}{n-k} = (-1)^{n-k}\frac{(-k-1)!}{(n-k)!(-n-1)!} \, & k \le n < 0 \\
		0 & \text{otherwise} \
	\end{array}\right.
\]
where $n$ and $k$ are integers.

Thus, the expansion in (\ref{eq:UR_Expansion}) for the reciprocal of $(\alpha s + \psi)_j$ is obtained thanks to the coefficients
\[
	D_{j,k} = \left\{ \begin{array}{ll}
		0 & 0 \le k < j \\
		\displaystyle\frac{1}{\alpha^{k}} F_{j,k-j} & k \ge j \
	\end{array} \right. 
\]
and the first few values are reported in Table \ref{tbl:ExpansionReciprocalRisedFactorial}.

\begin{table}	 
\[
	\begin{array}{c|ccccc}
	D_{j,k} & k=0 & k=1 & k=2 & k=3 & k=4 \\ \hline
	j=0 & 1 & 0 & 0 & 0 & 0 \\
	j=1 & 0 & \frac{1}{\alpha} & \frac{\psi}{\alpha^2} & \frac{2\psi}{\alpha^3} & \frac{3\psi^2-4\psi}{3\alpha^4} \\
	j=2 & 0 & 0 & \frac{1}{\alpha^2} & \frac{2\psi+1}{\alpha^3} & \frac{-\psi^2+5\psi+3}{\alpha^4} \\
	\displaystyle j=3 & 0 & 0 & 0 & \frac{1}{\alpha^3} & \frac{3(\psi+1)}{\alpha^4} \\
	j=4 & 0 & 0 & 0 & 0 & \frac{1}{\alpha^4} \
	\end{array}
\]
\caption{First coefficients $D_{j,k}$ in the expansion in (\ref{eq:UR_Expansion}) of $1/(\alpha s + \psi)_j$} \label{tbl:ExpansionReciprocalRisedFactorial}
\end{table}

\subsection{Asymptotic expansion for $R(s)$}

Let us first denote
\[
	f(s;a,b) = \left(a s + b - \frac{1}{2} \right) \log \left(1+\frac{b}{a s} \right) - b
\]
in order to represent $e(as;b)$ and its reciprocal $e(as;b)^{-1}$ as
\begin{equation}\label{eq:exp_e_expansion}
	e(as;b)^{\pm 1} = \exp(\pm f(s;a,b)) = \sum_{n=0}^{\infty} (\pm 1)^n \frac{\bigl(f(s;a,b)\bigr)^n}{n!}
\end{equation}

By the Taylor expansion for the logarithmic function, it is immediate to see that for $|s| > \left| {b}/{a} \right|$ it is 
\[
	\log \left(1+\frac{b}{a s} \right) = \sum_{k=1}^{\infty} \frac{(-1)^{k+1}}{k} \left( \frac{b}{a} \right)^k  s^{-k}
	, \quad |s| > \left| \frac{b}{a} \right| ,
\]
and hence standard algebraic manipulations leads to 
\[
	f(s;a,b) = \sum_{k=1}^{\infty} d_{k} s^{-k} 
	, \quad
	d_{k} = (-1)^{k} \left( \frac{b}{a} \right)^k \left( \frac{1}{2k} - \frac{b}{k(k+1)} \right) .
\]

To evaluate the  coefficients in the expansion of $\bigl(f(s;a,b)\bigr)^n$, and hence those of $e(as;b)$, we have to distinguish three cases. When $b\not\in\{0,1\}$ it is $d_{1}\not=0$ and hence, since 
\[
	f(s;a,b) = d_{1} s^{-1} \left( 1 + \frac{d_{2}}{d_{1}} s^{-1} + \frac{d_{3}}{d_{1}} s^{-2} + \dots \right) ,
\]
it is possible to write
\[
	(f(s;a,b))^n = d_{1}^{n} \sum_{k=0}^{\infty} f_{k}^{(n)}  s^{-n-k} ,
\]
where the coefficients $f_{k}^{(n)}$ in the expansion of $(f(s;a,b))^n$ can be evaluated, thanks to the Miller's formula for powers of unitary formal power series \cite[Theorem 1.6c]{Henrici1974}, as
\[
	f_{0}^{(n)} = 1
	, \quad 
	f_{k}^{(n)} = \sum_{\ell=1}^{k} \left( \frac{(n+1)\ell}{k} -1 \right) \frac{d_{\ell+1}}{d_{1}} f_{k-\ell}^{(n)} .
\]

A simple replacement of the above expansion in (\ref{eq:exp_e_expansion}) allows us to provide the expansion
\begin{equation}\label{eq:e_expans}
	e(a s; b)^{\pm 1} = \sum_{k=0}^{\infty} e_{k} s^{-k} 
	, 
\end{equation}
where the coefficients are given by
\[	
	e_{0} = 1 
	, \quad
	e_{k} = \sum_{j=0}^{k} (\pm  1 )^j \frac{d_{1}^{j} f_{k-j}^{(j)}}{j!} .
\]

When $\beta=1$, and hence $d_1 =0$ with $d_k\not=1$ for $k\ge2$, we can write
\[
	f(s;a,b) = d_{2} s^{-2} \left( 1 + \frac{d_{3}}{d_{2}} s^{-1} + \frac{d_{4}}{d_{2}} s^{-2} + \dots \right)
\]
and, in a very similar way as for the previous case, we obtain 
\[
	(f(s;a,b))^n = d_{2}^{n} \sum_{k=0}^{\infty} f_{k}^{(n)}  s^{-2n-k}
	, \quad
		f_{0}^{(n)} = 1, \quad f_{k}^{(n)} = \sum_{\ell=1}^{k} \left( \frac{(n+1)\ell}{k} -1 \right) \frac{d_{\ell+2}}{d_{2}} f_{k-\ell}^{(n)} 
\]
and hence the replacement of this expansion in (\ref{eq:exp_e_expansion}) leads to the coefficients for the expansion (\ref{eq:e_expans}) of $e(as;1)$ given by
\[
	e_{0} = 1 , \quad e_{1} = 0
	, \quad
	e_{k} = \sum_{j=0}^{\bar{k}} (\pm 1)^{j} \frac{d_{2}^{j} f_{k-2j}^{\left(j\right)}}{j!}
	, \quad
	\bar{k} = \left\{ \begin{array}{ll}
		{k}/{2} & \mbox{ even } k \\
		{(k-1)}/{2} & \mbox{ odd } k \
	\end{array} \right. .
\]

In the third case, when $b=0$, it is clearly $f(s;a,0) = 0$ and hence $e(as;b)=1$; thus, the corresponding coefficients in the expansion (\ref{eq:e_expans}) are $e_{0}=1$ and $e_k=0$ for $k=1,2,\dots$.

It is now a simple exercise to evaluate the coefficients of the expansion of $R(s)$ in (\ref{eq:UR_Expansion}) once the expansions of $e(s;\gamma)$, $e(\alpha s; 1-\gamma+\beta)$, $e(\alpha s; \beta)$ and $e(s;1)$ are known, which hold under reasonable assumptions, namely
\[
	|s| > \max \left\{ \gamma, \frac{1-\alpha+\beta}{\alpha}, \frac{\beta}{\alpha} , 1 \right\} .
\]

It is indeed possible to see that the product of four power series with coefficients $\{a_{k}\}$, $\{b_{k}\}$, $\{c_{k}\}$ and $\{d_{k}\}$ is the power series whose coefficients $\{R_{k}\}$ are given by
\begin{equation}\label{eq:Formula4Products}
	R_k = \sum_{j_{1}=0}^{k} \sum_{j_2=0}^{j_{1}} a_{j_2} b_{j_{1}-j_2}
	\sum_{j_3=0}^{k-j_{1}} c_{j_3} d_{k-j_{1}-j_3} 
\end{equation}

We provide here the explicit representation of the first three coefficients in the expansion of $R(s)$
\begin{eqnarray*}
	R_0 
	&\!\!\!=\!\!\!& 1 \\
	R_1 
	&\!\!\!=\!\!\!& \frac{1}{2} \left(  \frac{1}{\alpha}(1-\beta) \beta - \frac{1}{\alpha}(\gamma-\beta)(1-\gamma+\beta) -  \gamma(1-\gamma)  \right) \\
	R_2 
	&\!\!\!=\!\!\!& 
	\frac{1}{24 \alpha^2} \left( P_2(1-\gamma+\beta) - P_2(\beta) + \alpha^2 P_2(\gamma) \right) \\
	& & + (\gamma-1)(2\beta-\gamma) \left ( \frac{\beta(\beta-1)}{4\alpha^2} - \frac{\gamma(\gamma-1)}{4\alpha} \right)  - \frac{1}{12} \
\end{eqnarray*}
where $P_2(x) = 3x^4 - 10x^3 + 9x^2$.

\subsection{Expansion of $\Upsilon(s)$}
 
In order to asymptotically expand $\Upsilon(s)$ it is necessary to consider the expansions, as $|s|\to \infty $ and $|\arg(s)| < \epsilon$, for the scaled gamma function (\ref{eq:ScaledGamma}) and its reciprocal which are given  \cite{ParisKaminski2001} by  
\begin{equation}\label{eq:ExpansionGammaScaled}
	\Gamma^{\star}(z) = 1 + \sum_{k=1}^{\infty}  \gamma_{k} s^{-k} 
	, \quad
	\frac{1}{\Gamma^{\star}(z)} = 1 + \sum_{k=1}^{\infty} (-1)^{k} \gamma_{k} s^{-k} 
	, \quad
\end{equation}
where $\gamma_{k}$ are the coefficients in the Stirling formula for the asymptotic expansion of the factorial of positive integers and can be recursively evaluated as \cite{Nemes2010}
\[
	\gamma_{k} = (2k+1)!! b_{2k+1} 
	, \quad
	b_{0}=b_{1}=1	
	, \quad	
	b_{k} = \frac{1}{k+1} \left( b_{k-1} - \sum_{j=2}^{k-1} j b_{j} b_{k-j+1}\right) ,
\]
with $(2k+1)!!=(2k+1)\dot(2k-1)\cdots5\cdot3\cdot1$ the double factorial (we point out some differences related to the minus signs in the equations (\ref{eq:ExpansionGammaScaled}) with respect to those reported in \cite{ParisKaminski2001,Paris2010}, maybe due to some notational differences).

It is an immediate task to prove that for any integer $k \ge 1$ it is
\[
	(as + b)^{-k} = (as)^{-k} \left( 1 + \frac{b}{as}\right)^{-k} 
	= \frac{(-1)^k}{b^{k}} \sum_{j=k}^{\infty} (-1)^{j} \binom{j-1}{j-k} \left(\frac{b}{a}\right)^{j} s^{-j} 
\]
and hence, by a simple manipulation, from (\ref{eq:ExpansionGammaScaled}) we have
\[
	\Gamma^{\star}(as+b) = 1 + \sum_{j=1}^{\infty} \hat{d}_{j} s^{-j} 
	, \quad
	\frac{1}{\Gamma^{\star}(as+b)} = 1 + \sum_{j=1}^{\infty} \tilde{d}_{j} s^{-j} ,
\]
where
\[
	\hat{d}_{j} = (-1)^j \left( \frac{b}{a} \right)^j \sum_{k=1}^{j} (-1)^{k} \binom{j-1}{j-k} \frac{\gamma_{k}}{b^{k}} 
	, \quad
	\tilde{d}_{j} = (-1)^j \left( \frac{b}{a} \right)^j \sum_{k=1}^{j} \binom{j-1}{j-k} \frac{\gamma_{k}}{b^{k}} 	.
\]

The coefficients $\Upsilon_k$ in the expansion of
\[
	\Upsilon(s) = \sum_{k=0}^{\infty} \Upsilon_k s^{k}
\]
can be therefore evaluated by applying again the formula (\ref{eq:Formula4Products}) for the evaluation of the coefficients of the product of the four power series defining $\Upsilon(s)$.

In the first two columns of Table \ref{tbl:ExpansionGammaScaled} we show the first instances for $\hat{d}_{j}(a,b)$ and $\tilde{d}_{j}(a,b)$ as functions of $a$ and $b$. By inserting these values  in the formula (\ref{eq:UR_Definition}) for $\Upsilon(s)$ we are able to explicitly evaluate the first coefficients in the expansion (\ref{eq:UR_Expansion}) of $\Upsilon(s)$ according to the formula (\ref{eq:Formula4Products}), as reported in the third column of the Table \ref{tbl:ExpansionGammaScaled}.

\begin{table}
\[
	\begin{array}{c|c|c|c} \hline
	j & \hat{d}_{j}(a,b) & \tilde{d}_{j}(a,b) & \Upsilon_j \\ \hline
	1 & \frac{1}{12 a} & -\frac{1}{12 a} & 0 \\ 
	2 & \frac{1}{288 a^2} - \frac{b}{12 a^2} & \frac{1}{288 a^2} + \frac{b}{12 a^2}
		& \frac{(1-\alpha^2)(\gamma-1)}{12\alpha^2} \\
	3 & - \frac{139}{51840 a^3} - \frac{b}{144a^3} + \frac{b^2}{12a^3}  
	  & \frac{139}{51840 a^3} - \frac{b}{144a^3} - \frac{b^2}{12a^3}  
	  & \frac{(\gamma-1)}{12} \left( (\gamma+1) - \frac{2\beta - \gamma +1}{\alpha^3} \right) \\ \hline
	\end{array}
\]
\caption{First instances of the coefficients in the expansion of the scaled gamma function $\Gamma^{\star}(as+b)$ and its reciprocal $1/\Gamma^{\star}(as+b)$ and of the of the function $\Upsilon(s)$} \label{tbl:ExpansionGammaScaled}
\end{table}

 \section*{Acknowledgments}

The work of R.Garrappa has been supported by an INdAM-GNCS Project 2017.
%The work of R.Garrappa has been supported partially by an INdAM-GNCS Project 2017 and partially by the COST Action CA15225.
The authors are grateful to prof. Rajesh K. Pandey, form the Indian Institute of Technology in Varanasi (India), for his help in retrieving the biographical information on Prabhakar.

\bibliographystyle{myplain}
\bibliography{ML3_Biblio}

\end{document}